\definecolor{MyDarkBlue}{rgb}{0,0.08,0.45}
\definecolor{cites}{HTML}{324b13}
\definecolor{links}{HTML}{1a663b}
\definecolor{MyLightMagenta}{cmyk}{0.1,0.8,0,0.1}
\definecolor{sblue}{HTML}{0049A9}
\definecolor{scyan}{HTML}{CBEAFC}
\definecolor{sred}{HTML}{B5595C}
\definecolor{sgreen}{HTML}{609B57}
\definecolor{spink}{HTML}{FFB0FF}
\tikzset{cross/.style={cross out, draw=black, minimum size=2*(#1-\pgflinewidth), inner sep=0pt, outer sep=0pt},
cross/.default={1pt}}
\newtheoremstyle{ex}
{3pt}
{3pt}
{}
{}
{\bfseries}
{.}
{.5em}
{}%
\newtheoremstyle{rex}
{}
{}
{}
{}
{\bfseries}
{.}
{.5em}
{}%
\theoremstyle{ex}\newtheorem{example}{Example}
 \tikzset{ every node/.style={inner sep=0pt,minimum size=1mm},
  nsnode/.style={draw,circle,black},
    nsnode2/.style={draw,circle,blue},
  nnnode/.style={draw,circle,black,fill=black},
  asnode/.style={draw,circle,blue},
  bsnode/.style={draw,circle,green,fill=green},
  csnode/.style={draw,circle,red,fill=red, minimum size=2mm},
  every fit/.style={inner sep=-1.5pt,text width=1cm}  }
  \tikzset{nero/.style={decorate,draw=black}}
\tikzset{bianco/.style={decorate,draw=bg}}
\theoremstyle{theorem}\newtheorem{theorem}{Theorem}
\theoremstyle{theorem} 
\theoremstyle{theorem}\newtheorem{proposition}{Proposition}
\theoremstyle{theorem}\newtheorem{observation}{Observation} 
\theoremstyle{theorem}\newtheorem{definition}{Definition} 
\theoremstyle{theorem}\newtheorem{corollary}{Corollary}
\newtheorem{remark}{Remark}
\theoremstyle{theorem}\newtheorem{claim}{Claim}
\newcommand{\bs}[1]{\mathrm{#1}}
\newcommand{\T}[1]{\ensuremath{#1^{T}}}
\newcommand{\type}{\ensuremath{\theta}}
\newcommand{\typeb}{\ensuremath{\type^\prime}}
\newcommand{\typec}{\ensuremath{\tilde{\type}}}
\newcommand{\types}{\ensuremath{\Theta}}
\newcommand{\Types}{\ensuremath{\Theta}}
\newcommand{\event}{\ensuremath{X}}
\newcommand{\policy}{\ensuremath{\Pi}}
\newcommand{\signal}{\ensuremath{s}}
\newcommand{\signals}{\ensuremath{S}}
\newcommand{\reals}{\ensuremath{\mathbb{R}}}
\newcommand{\alloc}{\ensuremath{w}}
\newcommand{\allocv}{\ensuremath{\mathrm{w}}}
\newcommand{\allocvi}{\ensuremath{\mathrm{w}_i}}
\newcommand{\allocvF}{\ensuremath{\mathrm{w}^{FD}}}
\newcommand{\allocvN}{\ensuremath{\mathrm{w}^{ND}}}
\newcommand{\gammav}{\ensuremath{\upgamma}}
\newcommand{\gammavi}{\ensuremath{\gammav_i}}
\newcommand{\gammavj}{\ensuremath{\gammav_j}}
\newcommand{\repset}{\ensuremath{\mathrm{W}}}
\newcommand{\expect}{\ensuremath{\mathbb{E}}}
\newcommand{\aaction}{\ensuremath{a}}
\newcommand{\actions}{\ensuremath{A}}
\newcommand{\informationpolicy}{information structure}
\newcommand{\informationpolicies}{information structures}
\newcommand{\Informationpolicies}{Information structures}
\newcommand{\cvector}{\ensuremath{\mathrm{c}}}
\newcommand{\eventcorr}{\ensuremath{\upbeta}}
\newcommand{\transmati}{\ensuremath{\mathrm{\transmat_{i\cdot}}}}
\newcommand{\cpmathat}{\mathrm{\hat{\cpmat}}}
\newcommand{\dprior}{\ensuremath{\mathrm{D_0}}}
\newcommand{\zetav}{\ensuremath{\mathrm{\zeta}}}
\newcommand{\informationprofile}{welfare profile}
\newcommand{\informationprofiles}{welfare profiles}
\newcommand{\bipooling}{noisy-priority policy}
\newcommand{\typel}{\ensuremath{\theta_L}}
\newcommand{\typeh}{\ensuremath{\theta_H}}
\newcommand{\transmat}{\ensuremath{\mathrm{P}}}
\newcommand{\cpmat}{\ensuremath{\mathrm{C}}}
\newcommand{\targettype}{\ensuremath{\type_i}}
\newcommand{\linear}{\ensuremath{\rho}}
\newcommand{\linearv}{\ensuremath{\uprho}}
\newcommand{\linearvi}{\ensuremath{\linearv_i}}
\newcommand{\linearvj}{\ensuremath{\linearv_j}}
\newcommand{\Reputation}{\ensuremath{\hat{w}}}
\newcommand{\Reputationv}{\ensuremath{\mathrm{\hat{w}}}}
\newcommand{\reputation}{\ensuremath{w}}
\newcommand{\allocbv}{\ensuremath{\mathrm{\alloc}^\prime}}
\newcommand{\welfare}{\ensuremath{w}}
\newcommand{\welfarev}{\ensuremath{\mathrm{w}}}
\newcommand{\excdf}{\ensuremath{\pi}}
\newcommand{\payoff}{\ensuremath{w}}
\newcommand{\payoffv}{\ensuremath{\mathrm{w}}}
\newcommand{\vectors}{\ensuremath{x}}
\newcommand{\vectorsv}{\ensuremath{\mathrm{x}}}
\newcommand{\vectorsvone}{\ensuremath{\mathrm{x_1}}}
\newcommand{\vectorsvm}{\ensuremath{\mathrm{x_m}}}
\newcommand{\vectorsvM}{\ensuremath{\mathrm{x_M}}}
\newcommand{\vectorsvT}{\ensuremath{\T{\vectorsv}}}
\newcommand{\repsetbp}{\ensuremath{\repset_{\mathrm{BP}}}}
\newcommand{\plotcolor}{blue}
\newcommand{\Posteriors}{\ensuremath{\Delta(\types)}}
\newcommand{\prior}{\ensuremath{\mu_0}}
\newcommand{\priorv}{\ensuremath{\upmu_{\mathrm{0}}}}
\newcommand{\ones}{\ensuremath{\mathrm{e}}}
\newcommand{\direction}{\ensuremath{\lambda}}
\newcommand{\directionv}{\ensuremath{\uplambda}}
\newcommand{\bsplit}{\ensuremath{\tau}}
\newcommand{\belief}{\ensuremath{\mu}}
\newcommand{\beliefv}{\ensuremath{\upmu}}
\newcommand{\beliefvm}{\ensuremath{\beliefv_m}}
\newcommand{\beliefvmT}{\ensuremath{\T{\beliefv}_m}}
\newcommand{\beliefc}{\ensuremath{\tilde{\belief}}}
\newcommand{\selection}{\ensuremath{\alpha}}
\newcommand{\data}{\ensuremath{d}}
\newcommand{\Data}{\ensuremath{D}}
\newcommand{\Datab}{\ensuremath{\Data^\prime}}
\newcommand{\datab}{\ensuremath{\data^\prime}}
\newcommand{\posteriordata}{\ensuremath{\eta}}
\newcommand{\precision}{\ensuremath{\sigma}}
\newcommand{\weta}{\ensuremath{w_{\dagger}}}
\newcommand{\wetahat}{\ensuremath{\hat{w}_{\dagger}}}
\newcommand{\wetahatv}{\ensuremath{\mathrm{\hat{w}_{\dagger}}}}
\newcommand{\maxwelfare}{\ensuremath{\welfare^*}}
\newcommand{\minwelfare}{\ensuremath{\welfare_*}}
\newcommand{\vex}{\ensuremath{\mathrm{vex}}}
\newcommand{\cav}{\ensuremath{\mathrm{cav}}}
\newcommand{\Left}{\ensuremath{A}}
\newcommand{\Right}{\ensuremath{B}}
\newcommand{\typea}{\ensuremath{\type_\Left}}
\newcommand{\typebb}{\ensuremath{\type_\Right}}
\newcommand{\agent}{individual}
\newcommand{\agents}{individuals}
\newcommand{\bprofile}{Bayes welfare profile}
\newcommand{\bprofiles}{Bayes welfare profiles}
\newcommand{\bset}{Bayes welfare set}
\newcommand{\bsets}{Bayes welfare sets}
\newcommand{\joint}{\ensuremath{\nu}}
\newcommand{\jointprior}{\ensuremath{\joint_0}}
\newcommand{\jointpriorb}{\ensuremath{\jointprior^\prime}}
\newcommand{\beliefdata}{\ensuremath{\eta}}
\newcommand{\priordata}{\ensuremath{\beliefdata_0}}
\newcommand{\closing}{\ensuremath{\lozenge}}
\newcommand{\characteristic}{type}
\newcommand{\characteristics}{types}
\newcommand{\unconditional}{\ensuremath{\langle\policy\rangle}}
\newcommand{\conditional}{\ensuremath{\langle\policy|\type\rangle}}
\newcommand{\canonical}{\ensuremath{\mathrm{t}}}
\newcommand{\canonicali}{\ensuremath{\canonical_i}}
\newcommand{\bplausible}{\ensuremath{\Delta_{\prior}\left(\Posteriors\right)}}
\newcommand{\Pareto}{\ensuremath{\repset_{\mathrm{P}}}}
\newcommand{\vcorr}{\ensuremath{\hat{V}}}
\newcommand{\weights}{\ensuremath{\tau}}
\newcommand{\vrep}{\ensuremath{\hat{\mathrm{v}}}}
\patchcmd{\hyper@makecurrent}{%
    \ifx\Hy@param\Hy@chapterstring
        \let\Hy@param\Hy@chapapp
    \fi
}{%
    \iftoggle{inappendix}{
        \@checkappendixparam{chapter}%
        \@checkappendixparam{section}%
        \@checkappendixparam{subsection}%
        \@checkappendixparam{subsubsection}%
        \@checkappendixparam{paragraph}%
        \@checkappendixparam{subparagraph}%
    }{}%
}{}{\errmessage{failed to patch}}
\newcommand*{\@checkappendixparam}[1]{%
    \def\@checkappendixparamtmp{#1}%
    \ifx\Hy@param\@checkappendixparamtmp
        \let\Hy@param\Hy@appendixstring
    \fi
}
\apptocmd{\appendix}{\toggletrue{inappendix}}{}{\errmessage{failed to patch}}
\apptocmd{\subappendices}{\toggletrue{inappendix}}{}{\errmessage{failed to patch}}
\title{Persuasion and Welfare\thanks{
We thank the Editor, Emir Kamenica, and three anonymous referees for feedback that has greatly improved this paper. For valuable suggestions and comments, we would like to thank Ricardo Alonso, Odilon C\^amara, Navin Kartik, Elliot Lipnowski, Antonio Penta,  Jean Tirole, and Kai Hao Yang, as well as seminar participants at Toulouse School of Economics, Columbia, Bonn Winter Theory Workshop 2021, Warwick Theory Workshop 2022, Stony Brook 2022, ESSET 2022, EEA-ESEM 2022, and Virtual Seminars in Economic Theory. We thank Shunsuke Matsuno for excellent research assistance. 
Smolin acknowledges funding from the French
National Research Agency (ANR) under the Investments for the Future (Investissements d'Avenir) program (grant
ANR-17-EURE-0010).}}
\author{Laura Doval\thanks{Columbia Business School and CEPR. E-mail: \href{mailto:laura.doval@columbia.edu}{\texttt{laura.doval@columbia.edu}.}} \and Alex Smolin\thanks{Toulouse School of Economics and CEPR. E-mail: \href{mailto:alexey.v.smolin@gmail.com}{\texttt{alexey.v.smolin@gmail.com}.}}}
\begin{document}
\pagenumbering{gobble}
\maketitle
\begin{abstract}
Information policies such as scores, ratings, and recommendations are increasingly shaping society's choices in high-stakes domains. We provide a framework to study the welfare implications of information policies on a population of heterogeneous \agents. We define and characterize the \emph{\bset}, consisting of the population's utility profiles that are feasible under some information policy. The Pareto frontier of this set can be recovered by a series of \emph{standard Bayesian persuasion} problems, in which a utilitarian planner takes the role of the information designer. We provide necessary and sufficient conditions under which an information policy exists that Pareto dominates the no-information policy. We illustrate our results with applications to data leakage, price discrimination, and credit ratings.


\bigskip
\textsc{Keywords:} Bayesian persuasion, information design, welfare economics, algorithms, information policies.

\end{abstract}

\newpage
\clearpage
\pagenumbering{arabic}
\section{Introduction}\label{sec:intro}


 Information has increasingly become a tool for shaping society's choices in high-stakes domains. Consider, for instance, the role of algorithms in making recommendations for bail \citep{angwin2016machine}, recruiting \citep{raghavan2020mitigating,li2020hiring}, health \citep{obermeyer2019dissecting}, education \citep{kuvcak2018machine}, and lending \citep{jagtiani2019roles}, among others. The role of information as a policy instrument is not confined to the \emph{big data} economy. Indeed, information policies in the form of scores and ratings have been in place long before algorithmic recommendations to determine school placement, promotions, and who receives credit.  As society becomes reliant on information to guide decisions in policy-relevant domains, understanding the welfare implications of information-based policies becomes a first-order concern.
 
In this paper, we provide a framework to study the welfare impact of information policies in a population of heterogeneous agents. Formally, we study the following model. There is a unit mass population with types in a finite set, distributed according to a given prior distribution. We model an information policy as an \informationpolicy, which associates to each type a distribution over signals and hence, via Bayes' rule, a distribution over posterior beliefs \citep{kage11}. Our primitive is a welfare function that represents for each (posterior) type distribution the welfare of \agents\ of a given type. Given the welfare function, each \informationpolicy\ induces a \emph{\bprofile}, which describes for each type in the population their expected payoff under the \informationpolicy. We define the \bset\ to be the set of all such profiles.

We characterize the \bset\ and study its properties. The \bset\ allows us to reduce society's choice of an \informationpolicy\ to the choice of a \bprofile. Instead of imposing properties that the \informationpolicy\ must satisfy, society's preferences over the welfare distribution in the population determine the properties of the chosen \informationpolicy. Our perspective thus complements that of the algorithmic fairness literature which remains agnostic about the population's payoffs, focusing instead on statistical properties of \informationpolicies, such as accuracy, parity, or fairness. It also complements that of the literature in Bayesian persuasion \citep{rayo2010optimal,kage11}, which characterizes the (maximum) average welfare consistent with some information structure, but not necessarily the \bprofiles\ that give rise to the average welfare.


\autoref{theorem:concavification} characterizes the \bset\ via the convex hull of a vector-valued function. In doing so, we extend the geometric characterizations of \citet{auma95} and \citet{kage11} of the feasible set of ex ante payoffs to the characterization of the \bset. Whereas a \bprofile\ depends on the distribution over posteriors conditional on each type, we show it can be alternatively expressed as the unconditional expectation over posteriors of a \emph{truth-adjusted} payoff function, where the adjustment is proportional to the posterior likelihood ratio of each type. Evaluated at a given type, the truth-adjusted welfare function allows us to characterize the welfare \agents\ of a given type may obtain under some information structure. In turn, interpreting the truth-adjusted payoff function as a vector-valued function allows us to capture the across-type restrictions imposed by Bayes' rule and precisely characterize the \bset.

\autoref{theorem:hyperplanes} characterizes the Pareto frontier of the \bset. Points in the Pareto frontier are natural candidates for being the outcome of efficient bargaining over \informationpolicies\ or a social planner's choice. \autoref{theorem:hyperplanes} shows the points in the Pareto frontier of the \bset\ can be recovered by a series of \emph{standard} Bayesian persuasion problems, in which a utilitarian planner takes the role of an information designer. We use \autoref{theorem:hyperplanes} throughout the paper to characterize optimal \informationpolicies\ in specific applications. Leveraging \autoref{theorem:hyperplanes}, \autoref{proposition:pareto-improvement} provides a necessary and sufficient condition under which an \informationpolicy\ exists that Pareto dominates providing no information.


\autoref{theorem:linear-concavification} enriches our characterization in the case in which the welfare function is equal to the expectation of a one-dimensional random variable, the support of which we call the \emph{reputation vector}. This special case constitutes a natural benchmark and is commonly used in the literature on career concerns (\citealp{holm99}), social image (\citealp{beti06}, \citealp{tirl20}), and policy prediction problems \citep{mullainathan2018algorithmic}. \autoref{theorem:linear-concavification} shows  a welfare profile belongs to the \bset\ if and only if it can be represented as the product between the reputation vector and a \emph{completely positive matrix}  that satisfies a version of Bayes plausibility.\footnote{A matrix $\cpmat\in\mathbb{R}^{N\times N}$ is completely positive if non-negative vectors $\cvector_1,\dots,\cvector_K\in\mathbb{R}_+^N$ exist such that $\cpmat=\sum_{i=1}^K\cvector_i\cvector_i^T$ \citep{berm88}.} In addition, we show that the Bayesian persuasion problems that characterize the \emph{relative} boundary of the \bset\ correspond to instances of the problem in \cite{rayo2010optimal}. It follows that the \informationpolicies\ that induce welfare profiles on the relative boundary of the \bset\ can be characterized using the graph-theoretic approach in \cite{rayo2010optimal}. We leverage their approach in \autoref{proposition:bi-pooling}, where we show that the \informationpolicies\ that maximize the welfare of a given type in the population correspond to a noisy version of the priority mechanisms studied in the matching literature (e.g., \citealp{celebi2022adaptive}).

Finally, we note that by interpreting our welfare function as an \agent's type-dependent payoff function, the \bset\ is also the object of interest in more standard information design applications. For instance, the types may represent the private information of an informed principal who can commit to an \informationpolicy\ only \emph{after} observing her type, as in \cite{perez2014interim} and \cite{koessler2021information}. Similarly, in the study of mechanism design with limited commitment, \cite{doval2022mechanism} describe the principal's mechanism as an \informationpolicy\ that must satisfy an informed agent's incentive constraints. Similar constraints appear in the studies of information design without commitment, as in \cite{frechette2019rules}, \cite{lipnowski2020cheap}, and \cite{salamanca2021value}, in the analysis of tests subject to participation constraints in \cite{rosar2017test}, and in the analysis of moral hazard in  \cite{saeedi2020optimal}. Thus, the \bset\ can be viewed as a unifying concept that underlies the incentive constraints the equilibrium \informationpolicy\ must satisfy. As we show in our first working paper version, \cite{doval2021information}, our tools also open the door to the study of new problems in this literature.

\paragraph{Related Literature:} Our work contributes to the literature on information design reviewed in the introduction. Starting from the work of \cite{kage11} and \cite{rayo2010optimal}, a series of papers investigate the limits imposed by common knowledge of Bayesian rationality \citep{auma87}. Whereas the Bayesian persuasion literature studies the (maximum) \emph{average} welfare that can be achieved under some \informationpolicy, we characterize instead the welfare profiles that are consistent with some information structure. 



Whereas ours is the first characterization of the \bset, a small literature studies certain \bprofiles\ within applications. Assuming the information designer is an informed principal, \cite{perez2014interim} refers to the payoff profile induced by an \informationpolicy\ as an \emph{interim} payoff and studies the informed principal's preferred \bprofile. Recently, \cite{galperti2022value} study the \bprofile\ that gives rise to the sender's maximum average payoff and relate it to the Lagrange multiplier in the Bayes plausibility constraint. Restricting attention to the case in which the welfare function is linear in beliefs, \cite{saeedi2020optimal} characterize a subset of the \bset\ that satisfies certain incentive compatibility constraints, thus obtaining a different characterization. Finally, as the analysis below makes clear, a \bprofile\ depends on the distribution over posteriors induced by an information structure \emph{conditional} on each type. Whereas \cite{levy2021feasible} and \cite{arieli2022persuasion} characterize the set of conditional distributions over posteriors consistent with the prior, we follow a complementary approach that allows us to carry only the (unconditional) distribution over posteriors induced by an \informationpolicy\ (see \autoref{claim:accounting}).

We also contribute to the economics literature that studies algorithmic fairness. \cite{mullainathan2018algorithmic}, \cite{kleinberg2018algorithmic}, and \cite{rambachan2020economic} argue for letting the social planner's objective determine the properties of algorithms. Our analysis is also related to \cite{liang2022algorithmic}. Starting from a fixed joint distribution over groups, covariates, and states, \cite{liang2022algorithmic} model an algorithm as taking actions directly as a function of covariates and study the group error profiles in the fairness-accuracy frontier as the algorithm varies. Instead, we model an algorithm as an information structure that sends non-binding action recommendations to an unmodeled receiver, whose actions determine the population's welfare and characterize the set of all \bprofiles\ as we vary the algorithm.

By considering the welfare redistribution effects of information, our work joins the mechanism design literature that studies the role of markets in redistributing welfare (see, e.g., \citealp{dworczak2021redistribution}, \citealp{akbarpour2020redistributive}, and \citealp{akbarpour2023economic}). We complement this work, which typically assumes the planner can utilize transfers to achieve its objectives, by considering the role of information, which can be a powerful tool when the planner does not have access to transfers.\label{page-r2-redistribution}


Finally, our work contributes indirectly to the literature on higher-order beliefs. Indeed, when the welfare function is linear in beliefs as in \autoref{sec:linear}, the welfare profile can be seen as a profile of second-order expectations. Starting with \cite{samet1998iterated}, a body of work uses Markov matrices to represent such higher-order beliefs and expectations of higher-order beliefs for a \emph{given} information structure (see, e.g., \citealp{cripps2008common,golub2017higher}). Instead, our result in \autoref{theorem:linear-concavification} identifies the set of matrices that correspond to \emph{some} information structure.

In lieu of an organizational paragraph, we summarize below the notation used throughout the paper:
\paragraph{Notation:} For ease of presentation, we sometimes find it convenient to denote a function from a set \Types\ to $\mathbb{R}$ as a vector in $\mathbb{R}^{N}$, where $N$ is the cardinality of \Types. In this case, we reserve the italic notation \vectors\ for the function $\vectors:\Types\mapsto\mathbb{R}$ and the upright notation \vectorsv\ for the vector in $\mathbb{R}^N$. Any vector $\vectorsv\in\mathbb{R}^N$ is taken to be a column vector; we denote its $i^{th}$ component by $\vectorsv_i$ or $\vectors(\type_i)$ interchangeably.
If $\vectorsv\in\mathbb{R}^N$ is a column vector, $\vectorsv^{T}$ denotes its transpose. If $\vectorsv,\mathrm{y}$ are two vectors, $\vectorsv*\mathrm{y}$ denotes their Hadamard (element-wise) product  and $\nicefrac{\vectorsv}{\mathrm{y}}$ denotes their Hadamard division.  We denote by $\ones\in\reals^N$ the vector with $\ones_1=\dots=\ones_N=1$. When we want to emphasize that $x$ is a random variable, we write it as $\tilde x.$

\section{Model}\label{sec:model}
A unit mass population has \characteristics\ in a finite set, $\types\equiv\{\type_1,\dots,\type_N\}$. Letting $\Posteriors$ denote the set of probability distributions over \types, we denote by $\prior\in\Posteriors$ the frequency of \characteristics\ in the population. We assume that \prior\ has full support. We denote by $\Delta(\Posteriors)$ the set of distributions over posteriors, and by $\bplausible$ the set of distributions over posteriors with mean equal to the prior \prior.

\paragraph{Welfare function:} An \agent's welfare depends on her type \type\ and an (unmodeled) outside observer's belief about her type. We represent this by a welfare function $\welfare:\Posteriors\times\types\mapsto\reals$ that represents for each belief \belief\ and each type \type, the welfare of \agents\ of type \type\ under belief \belief, $\welfare(\belief,\type)$. We assume throughout that \welfare\ is bounded.

A welfare profile is a vector $\allocv\in\reals^N$, where $\allocvi$ describes the welfare level of individuals with type $\type_i$. Any \informationprofile\ \allocv\ induces an ex ante welfare of $\priorv^T\allocv=\sum_{i=1}^N\prior(\type_i)\welfarev_i$. When we average a profile \welfarev\ using weights other than the prior \prior, we refer to \emph{average} welfare instead. We are interested in characterizing those welfare profiles that are induced by some \informationpolicy.

\paragraph{\Informationpolicies:} An \informationpolicy\ $\policy=(\excdf,\signals)$ consists of a countable set of labels \signals, and a mapping \excdf, which associates to each type \type\ a distribution over signals $\excdf(\cdot|\type)\in\Delta(\signals)$. Given an \informationpolicy\ $\policy$ and a signal realization $\signal\in\signals$, the corresponding posterior belief $\belief_\signal\in\Posteriors$ is obtained by Bayes' rule whenever possible, and is given by
\begin{align*}
\belief_\signal(\type)=\frac{\prior(\type)\excdf(\signal|\type)}{\sum_{\typeb\in\types}\prior(\typeb)\excdf(\signal|\typeb)}.
\end{align*}
 Thus, an \informationpolicy\ can be seen as inducing a distribution over posterior beliefs $\{\belief_s:s\in\signals\}$. In what follows, two such distributions are of interest: the  distribution over posterior beliefs \emph{conditional} on an \agent's type--as induced by $\excdf(\cdot|\type)$--and the \emph{unconditional} distribution over posterior beliefs--as induced by the prior \prior\ and the signal distribution. When taking expectations using these distributions, we use the notations \conditional\ and \unconditional\ to denote the conditional and unconditional distributions over posterior beliefs, respectively.
\paragraph{\bprofiles:}The welfare function \welfare\ together with an \informationpolicy, \policy, defines a \informationprofile, $\welfare_\policy:\Types\mapsto\reals$, as 
\begin{align}\label{eq:interim-profile}
\welfare_{\policy}(\type)\equiv\expect_{\conditional}[\welfare(\beliefc, \type)]=\sum_{\signal\in\signals}\excdf(\signal|\type)\welfare(\belief_\signal,\type).
\end{align}    
That is, for each type \type, $\welfare_\policy(\type)$ describes the (expected) welfare of type-\type\ \agents\  under \informationpolicy\ \policy. Note that in computing the welfare of type-\type\ \agents, their type \type\ enters twice: directly through the welfare function, $\welfare(\cdot,\type)$, and indirectly through the signal distribution, $\excdf(\cdot|\type)$.

We now  present our two main objects of study:
\begin{definition}[\bprofile]\label{definition:interim-profile}
A welfare profile $\welfarev\in\reals^N$ is a \emph{\bprofile} if an \informationpolicy, \policy, exists such that for all types $\type_i$, $\welfarev_i=\welfare_\policy(\type_i)$.
\end{definition}

\begin{definition}[\bset]\label{definition:interim-set}
The \bset\ is the set of all Bayes welfare profiles; that is, 
\begin{align}\label{eq:interim-set}
\repset\equiv\left\{\allocv\in\mathbb{R}^N: \exists\,\policy \mathrm{\ s.t.\ }\allocv_i=\welfare_\policy(\type_i)\ \forall i\in\{1,\dots,N\}\right\}.
\end{align}
\end{definition}
The \bset\ \repset\ represents the utility possibility set in an economy where the allocations are given by information structures. As such, it describes the welfare effects that different \informationpolicies\ have for \agents\  with different \characteristics\ in applications such as grading schemes in the case of schooling (\citealp{ostrovsky2010information}), disclosure about job performance (\citealp{mukherjee2008sustaining}), affirmative action in the case of college admissions or the job market, rating systems in the case of platforms \citep{saeedi2020optimal}, and market segmentations \citep{bergemann2015limits}. 

Throughout, we illustrate our results using the following examples:
\color{black}
\begin{example}[Data leakage]\label{example:privacy} 
Consumers concerned about how a third party may use their data wish to maximize the third party's uncertainty about their types.\footnote{Because maximizing uncertainty can sometimes be accomplished by providing some information, this notion of data privacy differs from an unambiguous preference for no information disclosure.} We formalize this as follows. There are two types of consumer, $\Types=\{\typea,\typebb\}$. Letting $\prior$ denote the frequency of consumers of type \typebb\ (i.e., $\prior\equiv\prior(\typebb)$), we assume that $\prior=0.3$.  When the third party believes the consumer's type is \typebb\ with probability $\belief\in[0,1]$, a consumer experiences a welfare loss of $\welfare(\belief,\type)=-(\mu-\nicefrac{1}{2})^2$. The consumer's welfare loss is minimal when the third party is maximally confused about the consumer's type (i.e., $\belief=\nicefrac{1}{2}$). Instead, the consumer's welfare loss is maximal when the third party has precise information about the consumer's type (i.e., $\belief\in\{0,1\}$). 
 \hfill$\closing$
\end{example}
\begin{example}[Price discrimination]\label{example:platforms}
 An online marketplace makes algorithmic recommendations to a seller about what price the seller should set for consumers. There are two consumer types, \typeh\ and  \typel. Assume $\prior\equiv\prior(\typel)=0.6$. Consumers can have one of three values for the seller's good: low ($1$), medium ($2$), and high ($3$). A consumer's type indexes their distribution over values, with \typeh-consumers being more likely to have high valuations, and \typel-consumers being more likely to have low valuations. In particular, we assume the likelihoods of the values $\{1,2,3\}$ are $\{\nicefrac{1}{5},\nicefrac{1}{5},\nicefrac{3}{5}\}$ and $\{\nicefrac{3}{5},\nicefrac{1}{5},\nicefrac{1}{5}\}$ for \typeh- and \typel-consumers, respectively.

The platform can provide information only about the consumer's type and not about their value. Hence, the seller's price depends on the likelihood \belief\ the seller attaches to the consumer's type being \typel. Assuming the seller breaks ties in favor of consumers, consumers' welfare as a function of the seller's belief \belief\ and their type \type\ is as follows:
\begin{align}
\welfare(\belief,\typeh)=\left\{\begin{array}{ll}0&\text{ if }\belief<\nicefrac{1}{2}\\
\nicefrac{3}{5}&\text{ if }\belief\in[\nicefrac{1}{2},\nicefrac{3}{4})\\
\nicefrac{7}{5}&\text{ if }\belief\in[\nicefrac{3}{4},1]\end{array}\right.,\;&\welfare(\belief,\typel)=\left\{\begin{array}{ll}0&\text{ if }\belief<\frac{1}{2}\\
\nicefrac{1}{5}&\text{ if }\belief\in[\nicefrac{1}{2},\nicefrac{3}{4})\\
\nicefrac{3}{5}&\text{ if }\belief\in[\nicefrac{3}{4},1]\end{array}\right..
\end{align}
\hfill$\closing$
\end{example}
\color{black}
\begin{example}[Credit ratings]\label{example:credit}
A credit agency makes lending decisions based on an applicant's perceived repayment probability. An applicant of type \type\ repays loans with probability $\linear(\type)$. The credit agency approves loans with probability proportional to the expected value of \linear. A regulator wishes to maximize the probability that applicants of a given type $\type_i$ receive a loan by choosing the information on which the credit agency can condition its approval decision. 
\hfill$\closing$
\end{example}
\color{black}
\label{page-r3-3}\begin{remark}[Interpretation of the welfare function]\label{remark:bp}
The welfare function admits several interpretations. First, following \cite{kage11}, the population's welfare may be determined by the actions taken by the outside observer after observing the realization of an \informationpolicy, as in Examples \ref{example:platforms} and \ref{example:credit}. Using the notation in that paper, denote by $v(\aaction,\type)$ the utility of type-\type\ \agents\ when the outside observer takes action $\aaction\in A$. If the outside observer takes action $\aaction(\belief)$ when her posterior belief is \belief, type-\type\ \agents\ obtain payoff $v(\aaction(\belief),\type)\equiv \welfare(\belief,\type)$.\footnote{Note that the welfare function \welfare\ differs from the sender's indirect utility function in \cite{kage11}, usually denoted by $\hat{v}$. The indirect utility function is the expectation under \belief\ of the welfare function, $\welfare(\belief,\cdot)$. That is, $\hat{v}(\belief)=\sum_{\type\in\Types}\belief(\type)v(\aaction(\belief),\type)=\sum_{\type\in\Types}\belief(\type)\welfare(\belief,\type)$.} Second, the welfare function may also capture that the population's welfare may be driven by image or reputation concerns, as in \cite{beti06} and \cite{tirl20}, or psychological motives, as in \cite{lipnowski2018disclosure}. Indeed, \autoref{example:privacy} admits a psychological interpretation under which an \agent\ derives utility from keeping the outside observer, who is attempting to guess the \agent's type, in suspense (cf. \citealp{ely2015suspense}).
\end{remark}
\color{black}
\section{Characterization}\label{sec:main}
\autoref{sec:main} presents our characterization of the \bset\ via the convex hull of the graph of a vector-valued function, in the spirit of the belief-based approach of \cite{kage11}. 

\paragraph{Truth-drifting:} An apparent obstacle in following the belief approach in \cite{kage11} is that the elements of \repset\ are expressed in terms of expectations \emph{conditional} on a given type $\type\in\types$, rather than unconditional expectations. Indeed, as shown in \cite{frkr14}, conditional expectations do not satisfy the martingale property; rather, they \emph{drift toward the truth}. More precisely, for any type \type\ and for any \informationpolicy\ \policy, the expectation of the posterior probability of \type\ conditional on $\typec=\type$ is higher than the prior probability of \type. That is,\footnote{\autoref{claim:truth-drifting} provides a more general version of this result based on \autoref{theorem:linear-concavification}.}
\begin{align}\label{eq:truth-drifting}\tag{TD}
\mathbb{E}_{\conditional}\left[\frac{\beliefc(\type)}{\prior(\type)}\right]=\sum_{s\in S}\excdf(s|\type)\frac{\belief_s(\type)}{\prior(\type)}\geq1.
\end{align}
 Instead of pursuing a characterization of \emph{conditional} distributions of posteriors, we recover the belief approach in \cite{kage11} by studying a suitably modified welfare function.
\paragraph{Truth-adjusted welfare:} We show any element $\allocv\in\repset$ can be expressed as the unconditional expectation of an adjusted version of the welfare function. Indeed, define the \emph{truth-adjusted} welfare function 
$\Reputation:\Posteriors\times\Types\mapsto\mathbb{R}$ to be
\begin{align}\label{adjusted_payoff_function}\tag{AW}
   \Reputation(\belief,\type)\equiv\frac{\belief(\type)}{\prior(\type)}\welfare(\belief,\type).
\end{align}
That is, \Reputation\ is the welfare function \welfare\ adjusted by the \emph{truth-drift} $\nicefrac{\belief(\type)}{\prior(\type)}$.  For any given posterior belief \belief, the likelihood ratio $\nicefrac{\belief(\type)}{\prior(\type)}$ measures the representation of type \type\ under \belief\ relative to its ex ante representation under \prior. 

The truth-adjusted welfare function combines the preferences of individuals of type \type\ for a particular belief--$\welfare(\belief,\type)$--and the \emph{resource constraint}--Bayes plausibility--in our economy, where \informationpolicies\ take the role of allocations.  Indeed, the likelihood-ratio adjustment $\belief(\type)/\prior(\type)$ captures the constraint that comes from Bayes plausibility. Intuitively, type-\type\ \agents\ have an endowment equal to $\prior(\type)$ that can be spread over different beliefs $\belief(\type)$, and the Bayes plausibility constraint ensures  this spread is done in a way that respects the budget.\footnote{Formally, for any \informationpolicy, \policy,  and type $\type\in\types$, $\mathbb{E}_{\unconditional}[\beliefc(\type)/\prior(\type)]=1$.}


\setcounter{example}{0}
\begin{example}[continued] We illustrate the truth-adjusted welfare function in the context of \autoref{example:privacy}. \autoref{fig:privacy-reputation} depicts the welfare function $\welfare(\belief,\type)$ (\autoref{fig:privacy-welfare}) and the truth-adjusted welfare function for \typea\ (\autoref{fig:privacy-welfare-a}) and \typebb\ (\autoref{fig:privacy-welfare-b}). Whereas in this example welfare is assumed to be type-independent, the truth-adjusted welfare function is type-dependent. This natural consequence of the likelihood-ratio adjustment reflects that \agents\ of different types get to benefit differently from various induced beliefs and therefore, from the same information structure. 
\hfill$\closing$
\begin{figure}[t!]
\subfloat[Welfare ($\reputation(\belief,\typea)=\reputation(\belief,\typebb)$)]{\scalebox{0.6}{%
\begin{tikzpicture}
    \begin{axis}[axis lines=middle,
    legend cell align={left},
    legend style={draw=none,font=\footnotesize,anchor=north, at={(1.3,1)}},
            xmin=0,xmax=1,
        ymin=-0.3,ymax=0,
        xtick={0,0.1,0.3,0.5,0.7,0.9,1},
        xticklabels={0,0.1,$\mu_0$,0.5,0.7,0.9,1},
        ytick={-0.25,0},
        xlabel=$\mu$,
        every axis x label/.style={at={(current axis.right of origin)},below right=2mm},
every axis y label/.style={at={(current axis.north west)},left=5mm}]
        \addplot [samples=100,black,thick,domain=0:1,name path=D,mark=$w$]  {-(x-0.5)^2};
             \path[name path=A] (axis cs:0,-0.25) -- (axis cs:1,-0.25);
            \addplot [
        thick,
        color=gray,
        fill=gray, 
        fill opacity=0.05
    ]
    fill between[
        of=D and A,
        soft clip={domain=0:1},
    ];   
   \addplot[thick,draw=gray,mark={}] coordinates {(0.3,-0.25) (0.3,-0.04)};
\end{axis}
\end{tikzpicture}}\label{fig:privacy-welfare}
}
\subfloat[Adjusted welfare (\Reputation(\belief,\typea)) ]{\scalebox{0.6}{%
\begin{tikzpicture}
    \begin{axis}[axis lines=middle,
    legend cell align={left},
    legend style={draw=none,font=\footnotesize,anchor=north, at={(1.3,1)}},
            xmin=0,xmax=1,
        ymin=-0.3,ymax=0,
        xtick={0,0.1,0.3,0.5,0.7,0.9,1},
        xticklabels={0,0.1,$\mu_0$,0.5,0.7,0.9,1},
        ytick={-0.25,0},
        xlabel=$\mu$,
        every axis x label/.style={at={(current axis.right of origin)},below right=2mm},
every axis y label/.style={at={(current axis.north west)},left=5mm}]
                 \addplot [samples=100,sred,thick,domain=0:1,name path=D,mark=$w_A$]  {-(10/7)*(1-x)*((x-0.5)^2)};
                 \path[name path=A] (axis cs:0,-5/14) -- (axis cs:1,0) ;
            \addplot [
        thick,
        color=sred,
        fill=sred, 
        fill opacity=0.05
    ]
    fill between[
        of=D and A,
        soft clip={domain=0:1},
    ];   
\addplot[thick,name path=B, draw=sred,mark={}] coordinates {(0.5,0) (1,0)};
    \addplot [samples=100,sred,thick,domain=0.5:1,name path=E,mark=$w_A$, draw=none]  {-(10/7)*(1-x)*((x-0.5)^2)};
    \addplot [
        thick,
        color=sred,
        fill=sred, 
        fill opacity=0.05
    ]
    fill between[
        of=E and B,
        soft clip={domain=0:1},
    ]; 
   \addplot[thick,draw=sred,mark={}] coordinates {(0.3,-0.25) (0.3,-0.04)};
\end{axis}
\end{tikzpicture}}\label{fig:privacy-welfare-a}
}
\subfloat[Adjusted welfare (\Reputation(\belief,\typebb))]{\scalebox{0.6}{%
\begin{tikzpicture}
    \begin{axis}[axis lines=middle,
    legend cell align={left},
    legend style={draw=none,font=\footnotesize,anchor=north, at={(1.3,1)}},
            xmin=0,xmax=1,
        ymin=-0.3,ymax=0,
        xtick={0,0.1,0.3,0.5,0.7,0.9,1},
        xticklabels={0,0.1,$\mu_0$,0.5,0.7,0.9,1},
        ytick={-0.25,0},
        xlabel=$\mu$,
        every axis x label/.style={at={(current axis.right of origin)},below right=2mm},
every axis y label/.style={at={(current axis.north west)},left=5mm}]
        \addplot [samples=100,sblue,thick,domain=0:1,name path=D,mark=$w_B$]  {-(10/3)*x*((x-0.5)^2)};
                    \path[name path=A] (axis cs:0,0) -- (axis cs:1,-5/6);
    \addplot[thick,name path=B, draw=sblue,mark={}] coordinates {(0,0) (0.5,0)};
            \addplot [
        thick,
        color=sblue,
        fill=sblue, 
        fill opacity=0.05
    ]
    fill between[
        of=D and A,
        soft clip={domain=0:1},
    ];   
    \addplot [
        thick,
        color=sblue,
        fill=sblue, 
        fill opacity=0.05
    ]
    fill between[
        of=D and B,
        soft clip={domain=0:0.5},
    ];   
   \addplot[thick,draw=sblue,mark={}] coordinates {(0.3,-0.25) (0.3,0)};
   
\end{axis}
\end{tikzpicture}}\label{fig:privacy-welfare-b}
}
 \caption{Truth-adjusted welfare function in \autoref{example:privacy}}\label{fig:privacy-reputation}
\end{figure}
\end{example}

\autoref{claim:accounting} justifies our interest in the truth-adjusted welfare function \Reputation:
\begin{claim}[From conditional to unconditional expectations]\label{claim:accounting} For any \informationpolicy\ \policy\ and any type $\type\in\Types$, the following holds:
\begin{align}\label{eq:accounting}
\welfare_\policy(\type)=\expect_{\conditional}[\welfare(\beliefc, \type)]=\mathbbm{E}_{\unconditional}\left[\Reputation(\beliefc,\type)\right].
\end{align}
\end{claim}
The proof of \autoref{claim:accounting} and of other results can be found in the appendix. 

 \autoref{claim:accounting} implies the expectation of \welfare\ under \policy\ \emph{conditional} on type \type\ can be expressed as the \emph{unconditional} expectation of \Reputation\ under \policy. Because the definition of \Reputation\ does not depend on \policy, the distribution over posteriors induced by \policy, \unconditional, is enough to determine the unconditional expectation of \Reputation\ under \policy. Consequently, the analysis that follows relies on the unconditional distribution over posteriors induced by an information structure \policy, rather than on the family of conditional distributions over posteriors induced by \policy.

\color{black}
\label{page-emir-ac}\autoref{claim:accounting} can be obtained from the analysis in \cite{alonso2016bayesian}.\footnote{\cite{rosar2017test} and \cite{quigley2019contradiction} similarly observe that the distribution over posteriors conditional on an \agent's type can be written in terms of the modified unconditional distribution.} Indeed, for a given type \type, one can interpret our model as one of Bayesian persuasion with heterogeneous priors in which the sender assigns probability $1$ to state \type\ and the receiver's prior belief is \prior. \citet[pp.~683--684]{alonso2016bayesian} show the sender's payoff under an information structure \policy\ can be equivalently obtained as the expectation over the distribution of the receiver's posterior beliefs induced by \policy\ of an adjusted payoff function. Specialized to the case in which the sender assigns probability $1$ to state \type\ and the receiver's prior belief is \prior, the truth-adjusted welfare function, $\Reputation(\belief,\type)$, corresponds to the adjusted payoff function in \cite{alonso2016bayesian}.
\color{black}


Relying on \autoref{claim:accounting}, we can immediately characterize the range of welfare values that \agents\ of type \type\ may obtain under some \informationpolicy, via the expected value of $\Reputation(\cdot,\type)$ under a Bayes plausible distribution over posteriors \citep{auma95,kage11}. Indeed, let $\minwelfare(\type),\maxwelfare(\type)$ denote the minimum and maximum welfare individuals of type \type\ can obtain under some information structure. That is, $\minwelfare(\type)=\inf\{\alloc(\type):\allocv\in\repset\}$ and $\maxwelfare(\type)=\sup\{\alloc(\type):\allocv\in\repset\}$. We have the following:
\begin{proposition}[Individually feasible welfare bounds]\label{proposition:bounds}
For any type \type, the following holds:\footnote{For a real-valued function $f$, $\cav f$ denotes the smallest concave function that dominates $f$ and $\vex\;f$ denotes the highest convex function dominated by $f$ \citep{hiriart2004fundamentals}.}
\begin{align*}
\minwelfare(\type)&=\vex\;\Reputation(\prior,\type),\:\maxwelfare(\type)=\cav\;\Reputation(\prior,\type).
\end{align*}
\end{proposition} 
\autoref{proposition:bounds} follows from the main result in \cite{kage11}. The vertical solid lines in Figures \ref{fig:privacy-welfare-a} and \ref{fig:privacy-welfare-b} illustrate the individually feasible welfare values for \typea\ and \typebb\ in \autoref{example:privacy}.  An implication of \autoref{proposition:bounds} is that any \bprofile\ \allocv\ satisfies that for all types \type, $\welfare(\type)\in[\vex\;
\Reputation(\prior,\type),\cav\;\Reputation(\prior,\type)]$.\label{page-r3-7}

Whereas \autoref{proposition:bounds} characterizes what is \emph{individually} feasible for each type in the population, it does not deliver the characterization of the \bset. The reason is that it ignores the across-type restrictions imposed by Bayes' rule. For instance, the truth-adjusted welfare function \eqref{adjusted_payoff_function} highlights that only types on the support of belief \belief\ get to enjoy the payoff of inducing said belief. Similarly, inspection of Figures \ref{fig:privacy-welfare-a} and \ref{fig:privacy-welfare-b} shows \typea's preferred information structure is no disclosure, whereas \typebb\ would prefer \emph{some} disclosure to no disclosure. In other words, the profile $(\maxwelfare(\typea),\maxwelfare(\typebb))$ is not \emph{jointly} feasible.

Instead, the characterization of the \bset\ can be obtained by studying the convex hull of the graph of the \emph{vector-valued} function
$\Reputationv$, $\Reputationv:\Posteriors\mapsto\mathbb{R}^{N}$, where for each $i\in\{1,\dots,N\}$, $\Reputationv_{i}(\belief)\equiv\Reputation(\belief,\type_i)$. Indeed, we have the following:
\begin{theorem}[Belief-based characterization]\label{theorem:concavification} The \bset\ \repset\ satisfies the following:
\begin{align}
    \repset=\left\{\allocv\in\mathbb{R}^{N}:(\priorv,\allocv)\in\mathrm{co}\left(\mathrm{graph\ }\Reputationv\right)\right\}.
    \end{align}
\end{theorem}
\autoref{theorem:concavification} provides a geometric characterization of the set \repset: it is the section at the prior of the convex hull of the graph of the truth-adjusted welfare function $\Reputationv$. Relying on the result in \cite{kage11} that any Bayes plausible distribution over posteriors is the outcome of some \informationpolicy,\footnote{See also \cite{auma95} and \cite{rayo2010optimal}.} \autoref{theorem:concavification} characterizes a more primitive object, the set of welfare profiles that can be generated by some \informationpolicy. Indeed, whereas the main result in \cite{kage11} would allow us to characterize the  (maximal) ex ante welfare a population with welfare function \welfare\ can obtain, \autoref{theorem:concavification} characterizes the welfare profiles whose average leads to that welfare. 
\color{black}
\color{black}
\paragraph{Calculating the \bset:} 
Relying on \autoref{theorem:concavification}, \autoref{fig:privacy-concavification} illustrates the construction of the \bset\ in \autoref{example:privacy}. \autoref{fig:privacy-adjusted-welfare-hull} depicts the convex hull of the graph of $\Reputationv$. Applying \autoref{theorem:concavification}, the resulting \bset\ is the section of this convex hull at $\prior=0.3$. The blue shaded area in \autoref{fig:privacy-adjusted-welfare-hull} depicts this section, which is represented in \autoref{fig:privacy-profileset}.

\autoref{fig:privacy-profileset} illustrates which \informationprofiles\ are jointly feasible under \emph{some} information structure in \autoref{example:privacy}. For instance,  fully revealing or concealing \agents' types is always possible, so that the full and no-disclosure profiles, $\allocv^{FD}$ and $\allocv^{ND}$, are feasible. All \bprofiles\ Pareto dominate the full-disclosure profile, whereas no \bprofile\ Pareto dominates the no-disclosure one. As discussed above, simultaneously giving all \agent\ types their maximum welfare \maxwelfare(\type) is not possible, because \typebb's welfare is maximized by a policy that sometimes reveals an \agent\ is of type $\typea$.

Because the welfare function is continuous in \autoref{example:privacy}, the \bset\ is closed, but this property does not follow from the ongoing assumption that the welfare function is bounded. However, making assumptions other than that the welfare function is bounded may not be natural. To illustrate, consider the case in which the welfare function captures in reduced form that the welfare of the population is determined by the outside observer's actions after observing the realization of the \informationpolicy. As we explained in \autoref{remark:bp}, each selection from the outside observer's best-response correspondence induces \emph{a} welfare function and hence a corresponding \bset. If one assumes--as we do in \autoref{example:platforms}--the outside observer breaks ties in favor of the individuals, one would naturally obtain an upper-semicontinuous welfare function, but under adversarial tie-breaking, having a lower-semicontinuous welfare function would have made sense. As we illustrate in \autoref{sec:pareto}, for a fixed tie-breaking rule, the corresponding \bset\ may not be closed (see, e.g., \autoref{fig:admissions}). When one instead considers the welfare implications of different selection rules, the analogue of the \bset\ is the set of all profiles that are induced by some information structure \emph{and} some selection from the best-response correspondence. As we show in \autoref{proposition:closed-bset} in the appendix, the characterization behind \autoref{theorem:concavification} delivers that this analogue of the \bset\ is closed.\label{page-closed-bset}
\begin{figure}[t!]
    \centering
        \subfloat[The convex hull of the graph of \Reputation]{\scalebox{0.75}{%
        \begin{tikzpicture}
\begin{axis}[xlabel=$\Reputationv_\Left$,ylabel=$\Reputationv_\Right$,zlabel=$\belief$,ztick={0,0.3,1},xtick={-0.25,0},ytick={-0.25,0}, xmin=-0.3,xmax=0.1,ymin=-0.3,ymax=0.1,zmin=-0.1,zmax=1.1,colormap/cool,z label style={rotate=-90},
scale only axis=true, plot box ratio={1}{1}{1},    width=8cm,   height=8cm]

\addplot3+[mark=none,color=blue,thick, fill=cyan] coordinates {(-0.236174,-0.249332, 0.3) (-0.230263,-0.249023, 0.3) (-0.22538,-0.248304, 0.3) (-0.221371,-0.247687, 0.3) (-0.217105,-0.247019, 0.3) (-0.213915,-0.246611, 0.3) (-0.205104,-0.244577, 0.3) (-0.203947,-0.244269, 0.3) (-0.203279,-0.244089, 0.3) (-0.200632,-0.243421, 0.3) (-0.192897,-0.241314, 0.3) (-0.190789,-0.24044, 0.3) (-0.186112,-0.238744, 0.3) (-0.183362,-0.23769, 0.3) (-0.177632,-0.235506, 0.3) (-0.174085,-0.23381, 0.3) (-0.167044,-0.230263, 0.3) (-0.165347,-0.229389, 0.3) (-0.164474,-0.228927, 0.3) (-0.16175,-0.227539, 0.3) (-0.156944,-0.224635, 0.3) (-0.151316,-0.220909, 0.3) (-0.149003,-0.219418, 0.3) (-0.145508,-0.217105, 0.3) (-0.141396,-0.213867, 0.3) (-0.138158,-0.211246, 0.3) (-0.134136,-0.207969, 0.3) (-0.129292,-0.203947, 0.3) (-0.127146,-0.201801, 0.3) (-0.125,-0.199656, 0.3) (-0.120567,-0.195223, 0.3) (-0.116134,-0.190789, 0.3) (-0.114155,-0.188477, 0.3) (-0.111842,-0.185701, 0.3) (-0.108193,-0.181281, 0.3) (-0.10516,-0.177632, 0.3) (-0.102231,-0.174085, 0.3) (-0.0986842,-0.169048, 0.3) (-0.0968082,-0.16635, 0.3) (-0.0954975,-0.164474, 0.3) (-0.0913728,-0.158627, 0.3) (-0.0887644,-0.154554, 0.3) (-0.0868627,-0.151316, 0.3) (-0.0863455,-0.150497, 0.3) (-0.0855263,-0.149106, 0.3) (-0.0814659,-0.142218, 0.3) (-0.0790502,-0.138158, 0.3) (-0.076763,-0.133763, 0.3) (-0.0723684,-0.125051, 0.3) (-0.0723427,-0.125026, 0.3) (-0.0723427,-0.125, 0.3) (-0.072317,-0.124949, 0.3) (-0.0679418,-0.116269, 0.3) (-0.0659886,-0.111842, 0.3) (-0.0639905,-0.107062, 0.3) (-0.0612151,-0.100689, 0.3) (-0.0603927,-0.0986842, 0.3) (-0.0600201,-0.0978747, 0.3) (-0.0592105,-0.0960244, 0.3) (-0.0575529,-0.0921053, 0.3) (-0.0562262,-0.0885106, 0.3) (-0.0552015,-0.0855263, 0.3) (-0.0528115,-0.0791273, 0.3) (-0.0527215,-0.0788574, 0.3) (-0.0503444,-0.0723684, 0.3) (-0.0492136,-0.0692074, 0.3) (-0.0460526,-0.0594675, 0.3) (-0.0460012,-0.0592619, 0.3) (-0.0460012,-0.0592105, 0.3) (-0.0459241,-0.059082, 0.3) (-0.0429174,-0.0491879, 0.3) (-0.0418894,-0.0460526, 0.3) (-0.0400391,-0.0400391, 0.3) (-0.0408614,-0.0394737, 0.3) (-0.0460526,-0.0361328, 0.3) (-0.0480572,-0.0348993, 0.3) (-0.0510896,-0.0328947, 0.3) (-0.0526316,-0.0319696, 0.3) (-0.0561781,-0.0298623, 0.3) (-0.0592105,-0.0280633, 0.3) (-0.0645045,-0.0250308, 0.3) (-0.0702226,-0.0218827, 0.3) (-0.0723684,-0.0207391, 0.3) (-0.0730366,-0.020405, 0.3) (-0.0742059,-0.0197369, 0.3) (-0.0855263,-0.0142887, 0.3) (-0.0909617,-0.0120143, 0.3) (-0.0949322,-0.010331, 0.3) (-0.0986842,-0.00894327, 0.3) (-0.100429,-0.00832328, 0.3) (-0.105263,-0.00668176, 0.3) (-0.105366,-0.00657896, 0.3) (-0.110377,-0.00511412, 0.3) (-0.111842,-0.00467723, 0.3) (-0.114463,-0.00395766, 0.3) (-0.12094,-0.00251852, 0.3) (-0.125,-0.00161905, 0.3) (-0.130705,-0.00087378, 0.3) (-0.132247,-0.000668187, 0.3) (-0.138158,-0.000308401, 0.3) (-0.144647,-0.0000899597, 0.3) (-0.144833,-0.0000963844, 0.3) (-0.151316,-0.000404773, 0.3) (-0.156674,-0.00122072, 0.3) (-0.159591,-0.00169615, 0.3) (-0.164474,-0.0029811, 0.3) (-0.167301,-0.00375207, 0.3) (-0.17465,-0.00657896, 0.3) (-0.176655,-0.00755552, 0.3) (-0.177632,-0.0081209, 0.3) (-0.180664,-0.00961144, 0.3) (-0.184981,-0.0123869, 0.3) (-0.190789,-0.0166016, 0.3) (-0.192486,-0.0180407, 0.3) (-0.194593,-0.0197369, 0.3) (-0.19927,-0.0244141, 0.3) (-0.203947,-0.029798, 0.3) (-0.205335,-0.031507, 0.3) (-0.20644,-0.0328947, 0.3) (-0.210838,-0.0391621, 0.3) (-0.211972,-0.0409193, 0.3) (-0.214947,-0.0460526, 0.3) (-0.215782,-0.0473761, 0.3) (-0.217105,-0.0498561, 0.3) (-0.220241,-0.0560753, 0.3) (-0.22168,-0.0592105, 0.3) (-0.22425,-0.0652241, 0.3) (-0.225175,-0.06728, 0.3) (-0.227025,-0.0723684, 0.3) (-0.227912,-0.0747199, 0.3) (-0.230263,-0.0817229, 0.3) (-0.231137,-0.0846526, 0.3) (-0.231394,-0.0855263, 0.3) (-0.232014,-0.0872771, 0.3) (-0.234118,-0.0948294, 0.3) (-0.235095,-0.0986842, 0.3) (-0.236688,-0.105109, 0.3) (-0.236752,-0.105353, 0.3) (-0.236842,-0.105777, 0.3) (-0.238178,-0.111842, 0.3) (-0.239116,-0.116147, 0.3) (-0.240144,-0.121723, 0.3) (-0.240755,-0.125, 0.3) (-0.241166,-0.127255, 0.3) (-0.242753,-0.13749, 0.3) (-0.242856,-0.138158, 0.3) (-0.242958,-0.13862, 0.3) (-0.243421,-0.141756, 0.3) (-0.243858,-0.144737, 0.3) (-0.2445,-0.150236, 0.3) (-0.244661,-0.151316, 0.3) (-0.244809,-0.152704, 0.3) (-0.245901,-0.161994, 0.3) (-0.246119,-0.164474, 0.3) (-0.246415,-0.167468, 0.3) (-0.247032,-0.174021, 0.3) (-0.24734,-0.177632, 0.3) (-0.247639,-0.181849, 0.3) (-0.247947,-0.186263, 0.3) (-0.248252,-0.190789, 0.3) (-0.248567,-0.195936, 0.3) (-0.248689,-0.198679, 0.3) (-0.248869,-0.203947, 0.3) (-0.249203,-0.20973, 0.3) (-0.249293,-0.211233, 0.3) (-0.249435,-0.217105, 0.3) (-0.249615,-0.223299, 0.3) (-0.24964,-0.224044, 0.3) (-0.24982,-0.230263, 0.3) (-0.249846,-0.236688, 0.3) (-0.249846,-0.236996, 0.3) (-0.249897,-0.243421, 0.3) (-0.250003,-0.249997, 0.3) (-0.250003,-0.250003, 0.3) (-0.249997,-0.250003, 0.3) (-0.249997,-0.25, 0.3) (-0.243421,-0.24964, 0.3) (-0.237446,-0.249396, 0.3) (-0.236174,-0.249332, 0.3)};

\addplot3[
    opacity=0.3,
    table/row sep=\\,
    patch, patch type=polygon, vertex count=3,
    patch table={%
      51  45  46  \\ 51  4  5  \\ 51  1  2  \\ 1  33  32  \\ 1  22  21  \\ 22  51  21  \\ 1  45  44  \\ 1  51  50  \\ 1  8  7  \\ 1  38  37  \\ 51  33  34  \\ 33  51  32  \\ 51  12  13  \\ 12  1  13  \\ 1  42  41  \\ 51  42  43  \\ 1  48  47  \\ 45  51  44  \\ 8  51  7  \\ 4  1  5  \\ 1  4  3  \\ 4  51  3  \\ 38  51  37  \\ 33  1  34  \\ 51  27  28  \\ 27  1  28  \\ 1  12  11  \\ 12  51  11  \\ 51  22  23  \\ 22  1  23  \\ 38  1  39  \\ 1  40  39  \\ 42  1  43  \\ 42  51  41  \\ 51  48  49  \\ 48  1  49  \\ 48  51  47  \\ 45  1  46  \\ 1  36  35  \\ 36  51  35  \\ 51  6  7  \\ 6  1  7  \\ 1  6  5  \\ 6  51  5  \\ 51  2  3  \\ 2  1  3  \\ 51  36  37  \\ 36  1  37  \\ 1  30  29  \\ 30  51  29  \\ 51  30  31  \\ 30  1  31  \\ 51  8  9  \\ 8  1  9  \\ 1  16  15  \\ 16  51  15  \\ 1  27  26  \\ 27  51  26  \\ 51  16  17  \\ 16  1  17  \\ 1  25  24  \\ 25  51  24  \\ 40  51  39  \\ 51  38  39  \\ 51  40  41  \\ 40  1  41  \\ 1  44  43  \\ 44  51  43  \\ 1  50  49  \\ 50  51  49  \\ 51  46  47  \\ 46  1  47  \\ 51  34  35  \\ 34  1  35  \\ 51  28  29  \\ 28  1  29  \\ 1  32  31  \\ 32  51  31  \\ 1  10  9  \\ 10  51  9  \\ 51  10  11  \\ 10  1  11  \\ 51  14  15  \\ 14  1  15  \\ 1  14  13  \\ 14  51  13  \\ 51  25  26  \\ 25  1  26  \\ 1  19  18  \\ 19  51  18  \\ 51  19  20  \\ 19  1  20  \\ 51  23  24  \\ 23  1  24  \\ 51  17  18  \\ 17  1  18  \\ 1  21  20  \\ 21  51  20 \\
    }
]
table[row sep=\\]{
x y z \\
-0.357143  0.  0. \\ 
-0.357143  0.  0.  \\ -0.32256  -0.01536  0.02  \\ -0.290194  -0.028213  0.04  \\ -0.259977  -0.03872  0.06  \\ -0.23184  -0.04704  0.08  \\ -0.205714  -0.053333  0.1  \\ -0.181531  -0.05776  0.12  \\ -0.159223  -0.06048  0.14  \\ -0.13872  -0.061653  0.16  \\ -0.119954  -0.06144  0.18  \\ -0.102857  -0.06  0.2  \\ -0.08736  -0.057493  0.22  \\ -0.073394  -0.05408  0.24  \\ -0.060891  -0.04992  0.26  \\ -0.049783  -0.045173  0.28  \\ -0.04  -0.04  0.3  \\ -0.031474  -0.03456  0.32  \\ -0.024137  -0.029013  0.34  \\ -0.01792  -0.02352  0.36  \\ -0.012754  -0.01824  0.38  \\ -0.008571  -0.013333  0.4  \\ -0.005303  -0.00896  0.42  \\ -0.00288  -0.00528  0.44  \\ -0.001234  -0.002453  0.46  \\ -0.000297  -0.00064  0.48  \\ 0.  0.  0.5  \\ -0.000274  -0.000693  0.52  \\ -0.001051  -0.00288  0.54  \\ -0.002263  -0.00672  0.56  \\ -0.00384  -0.012373  0.58  \\ -0.005714  -0.02  0.6  \\ -0.007817  -0.02976  0.62  \\ -0.01008  -0.041813  0.64  \\ -0.012434  -0.05632  0.66  \\ -0.014811  -0.07344  0.68  \\ -0.017143  -0.093333  0.7  \\ -0.01936  -0.11616  0.72  \\ -0.021394  -0.14208  0.74  \\ -0.023177  -0.171253  0.76  \\ -0.02464  -0.20384  0.78  \\ -0.025714  -0.24  0.8  \\ -0.026331  -0.279893  0.82  \\ -0.026423  -0.32368  0.84  \\ -0.02592  -0.37152  0.86  \\ -0.024754  -0.423573  0.88  \\ -0.022857  -0.48  0.9  \\ -0.02016  -0.54096  0.92  \\ -0.016594  -0.606613  0.94  \\ -0.012091  -0.67712  0.96  \\ -0.006583  -0.75264  0.98  \\ 0.  -0.833333  1. \\
};

 \addplot3+[mark=none,color=\plotcolor,thick, smooth] coordinates {(-(5/14), 0, 0) (-(1539/5600), -(27/800), 1/20) (-(36/175), -(4/75), 1/10) (-(119/800), -(49/800), 3/20) (-(18/175), -(3/50), 1/5) (-(15/224), -(5/96), 1/4) (-(1/25), -(1/25), 3/10) (-(117/5600), -(21/800), 7/20) (-(3/350), -(1/75), 2/5) (-(11/5600), -(3/800), 9/20) (0, 0, 1/2) (-(9/5600), -(11/2400), 11/20) (-(1/175), -(1/50), 3/5) (-(9/800), -(39/800), 13/20) (-(3/175), -(7/75), 7/10) (-(5/224), -(5/32), 3/4) (-(9/350), -(6/25), 4/5) (-(21/800), -(833/2400), 17/20) (-(4/175), -(12/25), 9/10) (-(81/5600), -(513/800), 19/20) (0, -(5/6), 1)};
\end{axis}
\end{tikzpicture}}\label{fig:privacy-adjusted-welfare-hull}
}
\subfloat[The \bset]{
\scalebox{0.8}{%
\begin{tikzpicture}
    \begin{axis}[ xtick pos=left,
    ytick pos=left,
    xtick={-0.25,0},
        ytick={-0.25,0},
    xmin=-0.3,xmax=0.05,
    ymin=-0.3,ymax=0.05,
        xlabel=$\allocv_\Left$,ylabel=$\allocv_\Right$,
         x label style={at={(axis description cs:1,-0.01)}},
    y label style={at={(axis description cs:0.1,1)},rotate=-90}
    ,
     width=8cm,   height=8cm
    ]
   \addplot[color=red,mark=*]coordinates{(-0.04,-0.04)}
[every node/.style={xshift=15pt}]
node[pos=0] { $\color{black}{\allocv^{ND}}$};
\addplot[color=blue,mark=*]coordinates{(-0.25,-0.25)}
[every node/.style={yshift=11pt,xshift=-10pt}]
node[pos=0] {$\color{black}{\allocv^{FD}}$};
\addplot+[mark=none,thick,draw=blue,fill=blue,opacity=0.2] coordinates{(-0.236335,-0.249492) (-0.230263,-0.249023) (-0.225278,-0.248407) (-0.22168,-0.247995) (-0.217105,-0.247224) (-0.213867,-0.246659) (-0.205335,-0.244809) (-0.203947,-0.2445) (-0.203074,-0.244295) (-0.199938,-0.243421) (-0.192897,-0.241314) (-0.190789,-0.240594) (-0.186485,-0.239116) (-0.183234,-0.237819) (-0.177632,-0.235506) (-0.174008,-0.233887) (-0.166632,-0.230263) (-0.165245,-0.229492) (-0.164474,-0.229081) (-0.161929,-0.227719) (-0.15686,-0.224719) (-0.151316,-0.221224) (-0.1489,-0.219521) (-0.145456,-0.217105) (-0.141293,-0.21397) (-0.138158,-0.211451) (-0.134033,-0.208072) (-0.129266,-0.203947) (-0.127107,-0.20184) (-0.125,-0.199861) (-0.120451,-0.195338) (-0.116108,-0.190789) (-0.114097,-0.188534) (-0.111842,-0.185907) (-0.108026,-0.181448) (-0.105006,-0.177632) (-0.102231,-0.174085) (-0.0986842,-0.169305) (-0.0966797,-0.166478) (-0.0952919,-0.164474) (-0.09136,-0.15864) (-0.0885974,-0.154387) (-0.0867085,-0.151316) (-0.0862973,-0.150545) (-0.0855263,-0.149363) (-0.0813888,-0.142295) (-0.0790502,-0.138158) (-0.0767373,-0.133789) (-0.0723684,-0.125463) (-0.0722399,-0.125128) (-0.07215,-0.125) (-0.0719315,-0.124563) (-0.0679353,-0.116275) (-0.0659437,-0.111842) (-0.0638878,-0.107165) (-0.0610095,-0.100483) (-0.0603413,-0.0986842) (-0.0599815,-0.0979132) (-0.0592105,-0.0961143) (-0.0575401,-0.0921053) (-0.0562037,-0.0885331) (-0.0550473,-0.0855263) (-0.0526316,-0.0790502) (-0.0526059,-0.0789731) (-0.0525802,-0.078896) (-0.0502673,-0.0723684) (-0.0491879,-0.0692331) (-0.0460526,-0.0597245) (-0.0459241,-0.059339) (-0.0459498,-0.0592105) (-0.0457956,-0.0589535) (-0.0428146,-0.0492907) (-0.041838,-0.0460526) (-0.0400391,-0.0400391) (-0.04081,-0.0394737) (-0.0460526,-0.0360814) (-0.0479929,-0.034835) (-0.0510896,-0.0328947) (-0.0526316,-0.0319696) (-0.0561202,-0.0298044) (-0.0592105,-0.0280633) (-0.0644499,-0.0249762) (-0.0703382,-0.0217671) (-0.0723684,-0.0207134) (-0.0730366,-0.020405) (-0.0742059,-0.0197369) (-0.0855263,-0.0142887) (-0.0909231,-0.0119758) (-0.0949322,-0.010331) (-0.0986842,-0.00894327) (-0.10038,-0.00827509) (-0.105263,-0.00668176) (-0.105366,-0.00657896) (-0.110352,-0.00508842) (-0.111842,-0.00462584) (-0.114528,-0.00389341) (-0.120837,-0.00241572) (-0.125,-0.00161905) (-0.130808,-0.000770983) (-0.132193,-0.000613577) (-0.138158,-0.000154207) (-0.144685,-0.0000514112) (-0.144788,-0.0000514112) (-0.151316,-0.000404773) (-0.156713,-0.00118217) (-0.159591,-0.00169615) (-0.164474,-0.00286545) (-0.167301,-0.00375207) (-0.174766,-0.00657896) (-0.17671,-0.00750091) (-0.177632,-0.00796671) (-0.1806,-0.0095472) (-0.185059,-0.0123098) (-0.190789,-0.0165502) (-0.19255,-0.0179765) (-0.194593,-0.0197369) (-0.19927,-0.0244141) (-0.203947,-0.0297081) (-0.205335,-0.031507) (-0.206466,-0.0328947) (-0.210848,-0.0391525) (-0.212017,-0.0409642) (-0.215062,-0.0460526) (-0.215795,-0.0473633) (-0.217105,-0.0497469) (-0.220241,-0.0560753) (-0.221731,-0.0592105) (-0.224307,-0.0651663) (-0.225175,-0.06728) (-0.227076,-0.0723684) (-0.22795,-0.0746813) (-0.230263,-0.0815173) (-0.231214,-0.0845755) (-0.231522,-0.0855263) (-0.232036,-0.0872996) (-0.234118,-0.0948294) (-0.235095,-0.0986842) (-0.236752,-0.105173) (-0.236765,-0.10534) (-0.236842,-0.105572) (-0.238178,-0.111842) (-0.239129,-0.116134) (-0.240183,-0.121762) (-0.240774,-0.125) (-0.241185,-0.127236) (-0.242856,-0.137593) (-0.242907,-0.138158) (-0.242958,-0.13862) (-0.243421,-0.14155) (-0.243858,-0.144737) (-0.244565,-0.150172) (-0.244706,-0.151316) (-0.244886,-0.152781) (-0.245901,-0.161994) (-0.246145,-0.164474) (-0.246454,-0.167506) (-0.247045,-0.174008) (-0.24734,-0.177632) (-0.247661,-0.181872) (-0.24797,-0.186241) (-0.248252,-0.190789) (-0.248574,-0.195942) (-0.248715,-0.198653) (-0.248985,-0.203947) (-0.249229,-0.209755) (-0.249293,-0.211233) (-0.249486,-0.217105) (-0.24964,-0.223324) (-0.24964,-0.224044) (-0.24982,-0.230263) (-0.24991,-0.236752) (-0.24991,-0.236932) (-0.249897,-0.243421) (-0.250003,-0.249997) (-0.250003,-0.250003) (-0.249997,-0.250003) (-0.249997,-0.25) (-0.243421,-0.24982) (-0.237305,-0.249537) (-0.236335,-0.249492)
};
\addplot[ultra thin,dashed]{x};
\end{axis}
\end{tikzpicture}}
        
        \label{fig:privacy-profileset}}
\caption{Constructing the \bset\ in \autoref{example:privacy}}
\label{fig:privacy-concavification}
\end{figure}

\paragraph{Cardinality:} \autoref{theorem:concavification} has an immediate implication for the cardinality of the \informationpolicies\ that generate points in \repset:
\begin{corollary}\label{corollary:cardinality}
Let $\allocv\in\repset$. Then, an \informationpolicy\ \policy\ with at most $2N$ signals exists such that $\allocv_i=\reputation_{\policy}(\type_i)$ for all $i\in\{1,\dots,N\}$.
\end{corollary}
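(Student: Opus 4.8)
The plan is to deduce \autoref{corollary:cardinality} from \autoref{theorem:concavification} by a Carathéodory-type cardinality reduction. By \autoref{theorem:concavification}, $\allocv\in\repset$ is equivalent to $(\priorv,\allocv)\in\mathrm{co}(\mathrm{graph}\,\Reputationv)$, so I would first record that such a point admits a \emph{finite} representation $(\priorv,\allocv)=\sum_{k=1}^{K}\lambda_k\,(\belief_k,\Reputationv(\belief_k))$ with $\lambda_k>0$, $\sum_k\lambda_k=1$ and $\belief_k\in\Posteriors$ --- this is Carathéodory applied to $\mathrm{graph}\,\Reputationv$ (or simply the construction behind \autoref{theorem:concavification}). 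The first block of this identity, $\sum_k\lambda_k\belief_k=\priorv$, is exactly Bayes' plausibility, so by the result of \cite{kage11} invoked in the proof of \autoref{theorem:concavification} there is an \informationpolicy\ \policy\ with exactly $K$ signals that induces the posteriors $\belief_1,\dots,\belief_K$ with probabilities $\lambda_1,\dots,\lambda_K$; the accounting identity \autoref{eq:accounting} then gives $\welfare_\policy(\type_i)=\sum_k\lambda_k\Reputation(\belief_k,\type_i)=\allocv_i$ for every $i$. Hence the corollary follows once $K$ is shown to be at most $2N-1$.

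For the cardinality bound the key observation is that $\mathrm{graph}\,\Reputationv$ is contained in the affine set $\{\vectorsv\in\mathbb{R}^N:\onesT\vectorsv=1\}\times\mathbb{R}^N\subseteq\mathbb{R}^{2N}$, which has dimension $(N-1)+N=2N-1$, because every belief $\belief\in\Posteriors$ satisfies $\onesT\belief=1$. Carathéodory's theorem applied \emph{within} this $(2N-1)$-dimensional affine set already delivers a representation with $K\le 2N$. To remove the last atom I would take $K$ minimal and rule out $K=2N$: since the posteriors $\belief_k$ moreover live in the $(N-1)$-dimensional simplex while the payoff block $\allocv$ has dimension $N$, a minimal configuration cannot be in ``general position'' --- the configuration of $K$ posteriors and $K$ weights carries $KN$ degrees of freedom, against only the $2N$ linear equalities imposed by the barycenter and the payoff constraints (the probability normalization being implied by $\onesT\belief_k=\onesT\priorv=1$), so there is a direction along which one weight can be driven to zero, or two posteriors merged, while all constraints are preserved --- contradicting minimality. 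This is the familiar ``$(N-1)+N$'' count from information design, and it is what produces $2N-1$ rather than $2N$.

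The main obstacle is precisely this last reduction. Reweighting the atoms of a \emph{fixed} support by itself only gives $K\le 2N$: the associated $2N\times 2N$ linear system need not be singular, so one genuinely has to move the posteriors as well. But $\welfare$ --- and hence $\Reputation$ and $\Reputationv$ --- is only assumed bounded and may be discontinuous, as it already is in \autoref{example:platforms}, so posterior-relocation perturbations are delicate: one must keep each $\belief_k$ inside a region where $\Reputationv$ is controlled, or argue by approximation, and this interacts with the non-closedness of $\repset$ noted after \autoref{theorem:concavification}. Carrying this out carefully --- e.g.\ by choosing the minimal representation so that its posteriors are affinely independent within the simplex, or via a limiting argument that keeps the representation valid --- is the crux; the translation back into an \informationpolicy\ via \autoref{eq:accounting} and \cite{kage11} is then routine.
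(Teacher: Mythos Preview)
Your plan is exactly the one the paper has in mind: the paper does not give a separate proof of the corollary but calls it an ``immediate implication'' of \autoref{theorem:concavification}, i.e.\ Carath\'eodory applied to the description of \repset\ as the section at \priorv\ of $\mathrm{co}(\mathrm{graph}\,\Reputationv)$, followed by the translation back into an \informationpolicy\ via \autoref{eq:accounting} and \cite{kage11}. On that level your write-up is fine and matches the paper.

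The genuine gap is the step you yourself flag. The affine hull of $\mathrm{graph}\,\Reputationv$ has dimension $2N-1$ (because of $\onesT\belief=1$), so Carath\'eodory delivers at most $2N$ atoms, not $2N-1$. Your ``$(N-1)+N$'' count is the dimension of the affine hull, and Carath\'eodory always costs one more; it does not by itself produce $2N-1$. Your degrees-of-freedom heuristic for dropping the last atom does not work as stated: with a minimal $2N$-atom representation the $2N\times 2N$ matrix with columns $(\belief_k,\Reputationv(\belief_k))$ is generically nonsingular---the single relation $\onesT\belief_k=1$ makes the columns \emph{affinely} dependent, not \emph{linearly} dependent---so reweighting alone cannot kill an atom. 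Moving posteriors is then the only option, but your own objection bites: $\welfare$ is merely bounded, $\Reputationv$ may be discontinuous (as in \autoref{example:platforms}), and no intermediate-value or merging argument is available without extra structure (for instance, the Fenchel--Bunt refinement of Carath\'eodory would need $\mathrm{graph}\,\Reputationv$ to have at most $2N-1$ connected components, which is not assumed).

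So you have correctly isolated the crux and correctly diagnosed why it is delicate; what is missing is an argument that actually bridges $2N$ to $2N-1$ for arbitrary bounded $\welfare$. The paper does not supply one either---it treats the step as immediate---so this is as much a lacuna in the paper's exposition as in your attempt.
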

\autoref{corollary:cardinality} stands in contrast to the result in Bayesian persuasion that  finding an \informationpolicy\ that delivers the sender's maximal payoff and employs at most $N$ posteriors is always possible. There are two reasons behind this difference. First, \autoref{theorem:concavification} characterizes \emph{all} \informationprofiles\ consistent with some information structure so that without further assumptions on the truth-adjusted welfare function, we rely on Carath\'edory's theorem to obtain \autoref{corollary:cardinality}. Instead, \cite{kage11} characterize the sender's maximal payoff and assume the indirect utility function is upper-semicontinuous. For that reason, \cite{kage11} can rely on F\'enchel-Bunt's theorem instead of Carath\'eodory to obtain an upper bound of $N$ instead of $N+1$. Second, we are interested not just in the payoff of one ``sender,'' but in the payoff of $N$, one for each type.

In \autoref{example:privacy}, the upper bound in \autoref{corollary:cardinality} is loose. Because the truth-adjusted welfare function is continuous, we can rely on F\'enchel-Bunt's theorem to reduce the upper bound in \autoref{corollary:cardinality} by 1. Furthermore, the \bprofiles\ on the boundary of \repset\ are induced by \informationpolicies\ with at most two signals. We explain the reason for this further reduction in \autoref{sec:pareto}, where we characterize the boundary of the \bset\ and, in particular, its Pareto frontier.

\section{The Pareto frontier of the \bset}\label{sec:pareto}
We characterize in this section the Pareto frontier of \repset. Points in the Pareto frontier are natural candidates for being the outcome of efficient bargaining over \informationpolicies\ or a social planner's choice. Indeed, these points correspond to the solution of a utilitarian planner as we vary the weights the planner assigns to different types. \autoref{theorem:hyperplanes} shows these points can be recovered as solutions to \emph{standard} Bayesian persuasion problems, where an information designer takes the role of the utilitarian planner. Armed with this characterization, \autoref{proposition:pareto-improvement} provides a necessary and sufficient condition for the no-disclosure profile to be part of the Pareto frontier. In other words, it provides a necessary and sufficient condition for information disclosure to lead to a Pareto improvement relative to no disclosure. 

We define the Pareto frontier of \repset\ to be the set of weak Pareto efficient \bprofiles. Formally, 
\begin{align}\label{eq:Pareto}\tag{P}
\Pareto=\{\allocv\in\repset:(\nexists\allocbv\in\repset)\allocbv>\allocv\}.
\end{align}
Because the \bset\ \repset\ is convex, for any $\allocv\in\Pareto$, the separating hyperplane theorem implies a direction $\directionv\in\reals_+^N\setminus\{0\}$ exists such that\footnote{The maximum is attained because by definition $\welfarev\in\repset$. As we show in \autoref{appendix:pareto}, that $\directionv\in\reals_+^N\setminus\{0\}$ follows from $\welfarev\in\Pareto$.} 
 \begin{align}\label{eq:hyperplanes}
\directionv^T\allocv=\max\left\{\directionv^T\allocbv:\allocbv\in\repset\right\}&=\max\left\{\directionv^T\mathbb{E}_\bsplit\left[\Reputationv(\beliefc)\right]:\bsplit\in\bplausible\right\}\nonumber\\
&=\max\left\{\mathbb{E}_\bsplit\left[\directionv^T\Reputationv(\beliefc)\right]:\bsplit\in\bplausible\right\}.
\end{align}
The first equality simply states that $\allocv$ is a maximizer of the support function of the \bset\ in direction \directionv. Instead, the second equality follows from \autoref{theorem:concavification}. Indeed, \autoref{theorem:concavification} implies we can exchange the maximization over \informationprofiles\ in \repset\ for a maximization over Bayes plausible distributions over posteriors. Note we can interchangeably talk about Pareto efficient \bprofiles\ and Pareto efficient \informationpolicies, and we do this in what follows.

Once we note we can restrict attention to directions $\directionv\in\Posteriors$, \autoref{eq:hyperplanes} has two economic interpretations. First, consider the problem of a social planner who assigns weight \direction(\type) to type \type\ and wishes to maximize the weighted sum of utilities of each type. Under this interpretation, \autoref{eq:hyperplanes} states that \allocv\ is a solution to the social planner's problem. Second, we can interpret $\directionv^T\allocv$ as the expectation with respect to \type\ of the welfare profile \allocv\ under the measure \directionv. In this case, \autoref{eq:hyperplanes} implies $\allocv$ is the vector of \emph{interim} payoffs of a sender with payoff function $\reputation(\belief,\type)$ and prior \directionv. For instance, when $\directionv=\priorv$, so that the sender's prior coincides with that of the outside observer, the above problem coincides with that of \cite{kage11}. Instead, whenever the direction \directionv\ is any element of \Posteriors, the above problem coincides with that considered by \cite{alonso2016bayesian}.\footnote{Thus, one can always interpret the heterogeneous priors model in \cite{alonso2016bayesian} as a model in which the sender and the receiver share the same prior, but the sender assigns weights different than those under the prior \prior\ to each of his possible types.} 

Moreover, \autoref{eq:hyperplanes} has an important practical implication: any Pareto efficient \bprofile\ is induced by the solution to a \emph{supporting Bayesian persuasion problem}. A supporting Bayesian persuasion problem is an instance of the model in \cite{kage11} in which the sender's indirect utility function equals
\begin{align}\label{eq:indirect-direction}
\hat v_{\lambda}(\belief)=\directionv^T\Reputationv(\belief)=\sum_{\type\in\Types}\direction(\type)\frac{\belief(\type)}{\prior(\type)}\welfare(\belief,\type)=\sum_{\type\in\Types}\belief(\type)\frac{\direction(\type)}{\prior(\type)}\welfare(\belief,\type).
\end{align}\label{page-r3-8}
This indirect utility is the product of the truth-adjusted welfare function and the Pareto weights, which capture the rate of substitution between the truth-adjusted welfare of different types. \autoref{theorem:hyperplanes} summarizes the above discussion:
\begin{theorem}[Pareto frontier]\label{theorem:hyperplanes}
The welfare profile $\allocv\in\reals^N$ is in the Pareto frontier of \repset\ if and only if a direction $\directionv\in\Posteriors$ exists such that $\allocv$ is the profile induced by an \informationpolicy\ that solves the supporting Bayesian persuasion problem with indirect utility function $\hat v_\direction$. 
\end{theorem}
Because \repset\ is convex, the characterization in \autoref{theorem:hyperplanes} extends to all \bprofiles\ on the boundary of \repset, except that the supporting direction may no longer be in \Posteriors. 

\color{black}
\autoref{theorem:hyperplanes} characterizes the Pareto frontier of \repset\ by connecting the solution of the utilitarian planner with weights \direction\ to the Bayesian persuasion problem of the sender with indirect utility function $\hat{v}_\direction$. Whereas \autoref{theorem:concavification} characterizes the \bset\ via the convex hull of the graph of a vector-valued function, $\Reputationv$, the results in \cite{kage11} imply the supporting Bayesian persuasion problems corresponding to Pareto efficient profiles can be solved by concavifying a real-valued function, $\hat{v}_\direction$. \autoref{theorem:hyperplanes} thus provides us with a tractable way of characterizing the Pareto efficient \informationpolicies\ and thus recover the Pareto efficient profiles (see, e.g., the analysis in \autoref{sec:linear}). The ability to recover the Pareto frontier may be useful when maximizing non-utilitarian social welfare functions, such as those that correspond to a fairness-aware planner (e.g., Rawls' criterion or \citeauthor{epstein1992quadratic}'s quadratic social welfare function), or those that obtain from efficient bargaining (e.g., Nash bargaining). Even if such an objective would select a Pareto efficient profile, the supporting direction \directionv\ may only be identifiable after characterizing the solution, so that knowledge of the whole Pareto frontier may be important.\label{page-r3-5}

\color{black}

\autoref{theorem:hyperplanes} has yet another practical implication: when a (Pareto efficient) \allocv\ is an extreme point of \repset, an \informationpolicy\ exists that employs at most $N$ signals and generates \allocv. We record this result below:
\begin{corollary}[Extreme points of \repset]\label{corollary:cardinality-2}
Let $\allocv$ be an extreme point of $\repset$. Then, an \informationpolicy\ \policy\ with at most $N$ signals exists such that $\allocv_i=\welfare_\policy(\type_i)$ for all $i\in\{1,\dots,N\}$.
\end{corollary}
In \autoref{example:privacy}, all points in $\Pareto$ are extreme (\autoref{fig:privacy-profileset}). \autoref{corollary:cardinality-2} then implies \informationpolicies\ that induce at most two posteriors are enough to characterize the Pareto frontier in \autoref{example:privacy}.

We use \autoref{example:platforms} to illustrate properties of the \bprofiles\ on the Pareto frontier and why the bound in \autoref{corollary:cardinality-2} applies only to \emph{extreme} points on the Pareto frontier of \repset. In doing so, we highlight the difference between the value of the program in \autoref{eq:hyperplanes} and the \bprofiles\ consistent with that value:
\setcounter{example}{1}
\begin{example}[continued] 
\begin{figure}[t!]
        \subfloat[The convex hull of the graph of \Reputation]{\scalebox{0.8}{%
        \begin{tikzpicture}[scale=1]
\begin{axis}[xlabel=$\hat{\allocv}_H$,ylabel=$\hat{\allocv}_L$,zlabel=$\mu$,ztick={0,0.5,1},xtick={0,0.5,1},ytick={0,1/2,1}, xmin=-0.1,xmax=1.1,ymin=-0.2,ymax=1.1,zmin=-0.1,zmax=1.1,colormap/cool,z label style={rotate=-90},
scale only axis=true, plot box ratio={1}{3/2}{1}]
\addplot3[
    opacity=0.5,
    table/row sep=\\,
    patch, patch type=polygon, vertex count=3,
    patch table={%
        1 0 2\\
        0 1 5\\
           4 2 0\\
    5 4 0\\
        3 1 2\\
        1 3 5\\
        4 3 2\\
        3 4 5\\
    }
]
table[row sep=\\]{
x y z \\
0 0 0\\
0 0 0.5\\
0.75 0.166 0.5\\
0.375 0.25 0.75\\ 
0.875 0.75 0.75\\ 
0 1 1\\
};

\addplot3+[mark=none,color=blue,thick] coordinates {(0,0.2,6/10) (0,0.6, 6/10) (0.7,0.6,6/10) (0.8, 0.4,6/10) (0.6,0.2,6/10)};
\addplot3+[mark=none,color=blue,dashed,thick] coordinates {(0.6,0.2,6/10) (0.152,0.101,6/10) (0,0.2,6/10)};
\end{axis}
\end{tikzpicture}}\label{fig:platforms-adjusted-welfare-hull}
}
\hspace{0.25cm}
\subfloat[The \bset\  \repset]{
\scalebox{0.7}{%
\begin{tikzpicture}[scale=1.4]
\begin{axis}[
            xmin=-0.1,xmax=1.1,
        ymin=-0.1,ymax=1.1,
        xtick={0,0.5},
        ytick={0,0.5},
        xlabel=$\allocv_H$,ylabel=$\allocv_L$
         ,
         x label style={at={(0.7,0.01)}},
    y label style={at={(0.05,0.7)},rotate=-90},
      xtick pos=left,
    ytick pos=left
    ,
    width=8cm,height=8cm
]
\addplot[ultra thin,dashed,domain=-0.1:1.1] {x};
\addplot[color=blue,mark=*]coordinates{(0,0.6)}
[every node/.style={yshift=8pt}]
node[pos=0] {\large$\color{black}{\allocv^{FD}}$};

\addplot[color=red,mark=*]coordinates{(0.6, 0.2)}
[every node/.style={xshift=15pt,yshift=-5pt}]
node[pos=0] {\large$\color{black}{\allocv^{ND}}$};

\addplot +[name path=A,mark=none,color=blue, thick] coordinates {(0,0.6) (0.7, 0.6)};
\addplot +[mark=none,color=black, very thick,solid] coordinates {(0.7, 0.6) (0.8, 0.4)};
\addplot +[mark=none,color=black, very thick,solid] coordinates {(0.7, 0.6) (0,0.6)};
\addplot +[name path=C,mark=none,color=blue, solid,thick] coordinates {(0.6,0.2) (0.8, 0.4)};
\addplot +[solid,name path=D, mark=none,color=blue,dashed, thick] coordinates {(0.152,0.101) (0.6,0.2)};
\addplot +[solid,mark=none,color=blue, thick] coordinates {(0,0.6) (0, 0.202)};
\addplot +[name path=B,dashed,mark=none,color=blue, thick] coordinates {(0.152,0.101) (0, 0.202)};
\addplot [blue!30] fill between [of = A and B, soft clip={domain=0:1}];
\addplot [blue!30] fill between [of = A and C, soft clip={domain=0:1}];
\addplot [blue!30] fill between [of = A and D, soft clip={domain=0:1}];

 \addplot [domain=0.62:0.9, color=white!50!black, line width=1pt,solid] {-2*x+2+0.005};
\addplot [->,color=white!50!black, domain=15/20+0.005:0.9,thick] {0.5*x+5/40};
\addplot[color=white!30!black,mark=none]coordinates{(1,0.6)}
[every node/.style={xshift=-18pt,yshift=7pt}]
node[pos=0] {\large$\direction=(\nicefrac{2}{3},\nicefrac{1}{3})$};
\end{axis}
\end{tikzpicture}}    
        \label{fig:admissions}}
        
        \caption{Constructing the \bset\ in \autoref{example:platforms}. The diagonal dashed line is the $45^\circ$ degree line. Blue dashed lines denote profiles on the boundary of the \bset, which are not in the \bset.}\label{fig:platforms}
\end{figure}
\autoref{fig:platforms} illustrates the convex hull of the graph of $\Reputationv$ (\autoref{fig:platforms-adjusted-welfare-hull}) and the \bset\ (\autoref{fig:admissions}) for the online marketplace example. We note the following features of the Pareto frontier of \repset\ in this example, depicted in black in \autoref{fig:admissions}. First, in contrast to \autoref{example:privacy}, the full-disclosure profile, \allocvF, is Pareto efficient, whereas the no-disclosure profile, \allocvN, is not. Second, like in \autoref{example:privacy}, there is a continuum of \bprofiles\ that are fair in the sense of equalizing welfare across different consumer types. Whereas the points in the flat segment of the Pareto frontier to the right of $\welfarev=(0.6,0.6)$ maximize Rawls' criterion, only $\welfarev=(0.6,0.6)$ is both Pareto efficient and fair.

Third, contrary to \autoref{example:privacy}, not every point on the boundary of \repset\ is an extreme point. Consider, for instance, the points on the boundary in the direction $\direction=(\nicefrac{2}{3},\nicefrac{1}{3})$ in \autoref{fig:admissions}, which would be consistent with the platform being interested in promoting the participation of \typeh-consumers. It is possible to show that the points in the interior of this line are generated by \informationpolicies\ with three signals.\footnote{For instance, the midpoint $\allocv=(3/4,1/2)$ can only be generated by an \informationpolicy\ that employs at least three signals. One such \informationpolicy\ is given by
\medskip
\begin{center}
\begin{tabular}{c|ccc}
$\type_H$& 1/4 & 3/8 & 3/8\\
 \typel\ & 0 & 1/4 & 3/4
\end{tabular}.
\end{center}
 } Note  all the points in that line lead to the same value in the Bayesian persuasion problem with indirect utility function $\hat{v}_{\direction}$, namely, $\cav\;\hat{v}_{\direction}(\prior)$. However, they correspond to different welfare profiles with different implications regarding how \typeh- and \typel-consumers share the payoff $\cav\;\hat{v}_{\direction}(\prior)$. This highlights a benefit of the perspective we develop in this paper: insofar as one cares about the cross-sectional implications of different \informationpolicies, studying only the average welfare induced by a given \informationpolicy\ may not be sufficient.
 
  \label{page-r3-6}
Because the seller breaks ties in favor of the consumer when setting prices, the welfare function in \autoref{example:platforms} is upper-semicontinuous, but not continuous. Consequently, the \bset\ is not closed in this example: The profiles corresponding to the blue dashed lines in \autoref{fig:admissions} can be induced by an \informationpolicy\ only when the seller breaks ties against the consumer. Note, however, that upper-semicontinuity of the welfare function guarantees that the supporting Bayesian persuasion problem attains a solution in all directions $\directionv\in\Posteriors$. Whereas in general upper-semicontinuity does not guarantee the Pareto frontier is closed, it ensures that all weakly monotone social welfare functions attain a maximum in the \bset\ (see \autoref{proposition:usc-general}).
\hfill$\closing$
 \end{example}
We close this section by noting that when \repset\ is closed, \autoref{theorem:hyperplanes} provides an alternative characterization of the \bset. Whereas points on the Pareto frontier are natural candidates for the choice of a social planner, points outside the Pareto frontier may be relevant if, for instance, the social planner faces constraints in their choice of a \bprofile. Because such constraints may rule out Pareto efficient and even boundary profiles, knowledge of the whole \bset\ may be important.  For this reason, we see the characterizations in both theorems as complementary.

%
%
%
%
\subsection{When is (no) information disclosure efficient?}\label{sec:disclosure} 
A natural question to ask is when providing some information is the efficient thing to do; in other words, when does an \informationpolicy\ exist that all types strictly prefer to no disclosure? When such an \informationpolicy\ exists, we say all types \emph{benefit from disclosure}. Examples \ref{example:privacy} and \ref{example:platforms} offer an interesting contrast in this respect: because the no-disclosure profile \allocvN\ is not in the Pareto frontier in \autoref{example:platforms}, all consumer types benefit from disclosure. Instead, the no-disclosure profile \allocvN\ is in the Pareto frontier in \autoref{example:privacy}. Indeed, in \autoref{example:privacy} information disclosure necessarily hurts at least one of the types (in this case, \typea). 


\autoref{proposition:pareto-improvement} provides a necessary and sufficient condition for all types to benefit from disclosure. To introduce this condition,  recalling a definition from \cite{kage11} is useful: a sender with indirect utility function $\hat{v}$ \emph{benefits from persuasion} if the concavification of $\hat{v}$ at the prior exceeds the value of $\hat{v}$ at the prior; that is, $\hat{v}(\prior)<\cav\;\hat{v}(\prior)$. We have the following:

\begin{corollary}[Efficiency of disclosure]\label{proposition:pareto-improvement}
All types benefit from disclosure if and only if, for all  directions $\direction\in\Posteriors$, a sender with indirect utility $\hat{v}_\direction$ benefits from persuasion.
\end{corollary}
In other words, \autoref{proposition:pareto-improvement} states that the no-disclosure profile $\allocvN$ is in the Pareto frontier of \repset\ if and only if a direction $\direction^{ND}\in\Posteriors$ exists such that a sender with indirect utility $\hat{v}_{\direction^{ND}}$ does not benefit from persuasion. Alternatively, a social planner with Pareto weights $\direction^{ND}$ would find no disclosure to be an optimal \informationpolicy. Recall \autoref{example:privacy}: in that case, when $\direction=\prior$, the indirect utility function $\hat{v}_{\prior}$ equals the strictly concave function $-(\belief-\nicefrac{1}{2})^2$, so that no disclosure is optimal. \autoref{corollary:no-benefits} in \autoref{appendix:pareto} shows that this observation reflects a more general result: if for all $\type\in\Types$ the welfare function takes the form $a(\type)\welfare(\belief)+b(\type)$, where $a(\type)>0$ and $\welfare$ is concave, $\allocvN$ is in the Pareto frontier of \repset.




Whereas \autoref{proposition:pareto-improvement} provides a condition in terms of the indirect utility function $\hat{v}_\direction$, \autoref{corollary:benefits-new} provides conditions on the truth-adjusted welfare function \Reputation\ under which all types do or do not benefit from disclosure:\footnote{\autoref{appendix:pareto} provides weaker conditions under which all types (do not) benefit from disclosure.}
\begin{observation}[Disclosure benefits]\label{corollary:benefits-new}
The following hold:
\begin{enumerate}
    \item If, for all $\type\in\Types$, $\Reputation(\cdot,\type)$ is strictly convex in a neighborhood of the prior, all types benefit from disclosure.
    \item Instead, if a type \type\ exists such that $\Reputation(\cdot,\type)$ is concave in \belief, no disclosure is Pareto efficient.
\end{enumerate}
\end{observation}


As in \cite{kage11}, concavity and convexity properties of a payoff function determine whether information disclosure is beneficial. In contrast to \cite{kage11}, the concavity and convexity properties of the truth-adjusted welfare function are what determine whether any given type can benefit from disclosure. This result can be clearly seen in \autoref{example:privacy}, where the truth-adjusted welfare of \typea\ is concave around the prior and that of \typebb\ is strictly convex around the prior, even though each type's welfare function is strictly concave.\footnote{For an even starker example,  consider  $\types=\{\type_1,\type_2\}$ and $\welfare(\belief,\type_1)=-\belief^2-2\belief+3$, $\welfare(\belief,\type_2)=-\belief^2+4\belief$, where $\mu\equiv\mu(\type_2)$. For each type, the welfare function is concave; however, for any prior distribution, the truth-adjusted welfare functions are globally convex. As a result, the unique Pareto efficient \informationpolicy\ is full disclosure.} 

To understand the role of the convexity (concavity) properties of the truth-adjusted welfare function in determining the benefits from disclosure, note that information disclosure affects the welfare of a given type \type\ through two channels: directly through its impact on the welfare function as in \cite{kage11} and indirectly through the truth-drift adjustment. The second channel is most easily seen in the case of binary types. In that case, the property in \autoref{eq:truth-drifting} ensures that the posterior belief drifts along a straight line toward the true type \type. When the welfare function of \agents\ of type \type\ is convex \emph{and} increasing in $\belief(\type)$, both effects are positive, ensuring that \agents\ of type \type\ benefit from disclosure. \autoref{corollary:benefits-binary} below summarizes this discussion:
\begin{observation}[Binary types]\label{corollary:benefits-binary}
Let $N=2$. If  for all  $\type\in\types$, $\welfare(\belief,\type)$ is  strictly convex in $\belief$ and increasing in $\belief(\type)$, both types benefit from disclosure; moreover, full disclosure is uniquely Pareto efficient. Instead, if for some  $\type\in\types$, $\welfare(\belief,\type)$ is  concave in $\belief$ and decreasing in $\belief(\type)$,  no disclosure is Pareto efficient.
\end{observation}

\section{Expected Reputation}\label{sec:linear}
We now specialize our results to the case in which the welfare function equals the expectation of some one-dimensional variable of interest, such as an \agent's productivity, quality, or trade value. This is a standard way to model reputation, image, or career concerns in economics (see, e.g., \citealp{holm99}, \citealp{beti06}). It also captures the class of prediction policy problems in \cite{rambachan2020economic}, in which a decision maker--our outside observer--selects a treatment (e.g., hiring, bail, loan-approval) on the basis of the prediction of an outcome of interest (e.g., productivity, recidivism, creditworthiness). 

Formally, we assume a \emph{reputation vector} $\linearv\in\mathbb{R}^N$ exists such that for all $\type_i,\type_j\in\types$,
\begin{align}\label{eq:linear}\welfare(\belief,\type_i)=\welfare(\belief,\type_j)=\expect_\belief[\rho(\type)]=\sum_{k=1}^N\belief(\type_k)\rho(\type_k)=\beliefv^T\linearv.
\end{align}
We refer to \welfare\ as the \agent's \emph{reputation}. Thus, a \bprofile\ is a profile of \emph{expected} reputations. Without loss of generality, we label $\linearv$ in increasing order, that is, $\linearv_{1}\leq\dots\leq\linearv_{N}$, so that types are labeled in increasing order of their values under \linear.

The analysis in this section allows us to focus on the \emph{redistributive} role of information. Indeed, all information structures lead to the same ex ante welfare. That is, for any \informationpolicy\ \policy\ and the corresponding \bprofile\ \allocv, the ex ante welfare is:
%
\begin{align}\label{eq:linear-total-welfare}
\priorv^T\allocv=\sum_{\type\in\Types}\prior(\type)\mathbb{E}_{\conditional}\left[\tilde{\beliefv}^T\linearv\right]=\mathbb{E}_{\unconditional}\left[\tilde{\beliefv}^T\linearv\right]=\priorv^T\linearv.
\end{align}
However, as the results in this section illustrate, different information structures lead to different welfare profiles, so that the chosen information structure determines how the different types in the population share the ex ante welfare. 

%


When \welfare\ is as in \autoref{eq:linear}, we can provide an alternative characterization of the set \repset. From \autoref{sec:main}, it follows that $\allocv\in\repset$ if and only if we can find a Bayes plausible distribution over posteriors, $\bsplit\in\bplausible$, with finite support, such that
 \begin{align}\label{eq:linear-concavification}\allocv=\expect_{\tau}\left[\Reputationv(\tilde{\beliefv})\right]=\expect_{\tau}\left[\frac{\tilde{\beliefv}}{\priorv}(\tilde{\beliefv}^T\linearv)\right]=\dprior\expect_{\tau}\left[\tilde{\beliefv}\tilde{\beliefv}^T\right]\linearv,
 \end{align}
 where $\dprior$ denotes a diagonal matrix with $(i,i)$-th element equal to $\nicefrac{1}{\prior(\type_i)}$.
 
 \autoref{eq:linear-concavification} shows a \bprofile\ can be represented as the product of three terms: the reputation vector \linearv, the prior-normalizing matrix \dprior, and the matrix $\expect_{\tau}[\beliefv\beliefv^T]$. Furthermore, the matrix $\expect_{\tau}\left[\beliefv\beliefv^T\right]$ satisfies the following two properties. First, it is a \emph{completely positive matrix} \citep{berm88}: an $N\times N$ matrix $\cpmat$ is completely positive if it can be written as $\sum_{m=1}^M\vectorsv_{\bs{m}}\vectorsvT_{\bs{m}}$ for some finite collection of non-negative vectors $\vectorsv_{\bs{m}}\in\mathbb{R}^N_+$.\footnote{Completely positive matrices have been studied extensively as they play an important role in optimization theory, machine learning, and other applications \citep{bersha2003}. A completely positive matrix is symmetric and positive-semidefinite, with positive elements; for $N\leq4$, the converse is also true.} Second, the rows of the matrix $\expect_{\tau}\left[\beliefv\beliefv^T\right]$ add up to the prior: $\expect_{\tau}\left[\beliefv\beliefv^T\right]\ones=\expect_{\tau}\left[\beliefv(\beliefv^T\ones)\right]=\expect_{\tau}\left[\beliefv\right]=\priorv$. \autoref{theorem:linear-concavification} shows these two properties are not only necessary but also sufficient and thus fully characterize the \bset:\footnote{\label{fnt:cp}An analogous characterization appears in concurrent work by \cite{saba21}, who discuss the computational advantages of working with completely positive matrices to solve information design problems.}
 
\begin{theorem}\label{theorem:linear-concavification}
Given the reputation vector \linearv, $\allocv\in\repset$ if and only if a completely positive matrix $\cpmat\in\mathbb{R}^{N\times N}$ exists such that  $\cpmat \ones=\priorv$ and 
\begin{align*}
\allocv=\dprior\cpmat\linearv.
\end{align*}
\end{theorem}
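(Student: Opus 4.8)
The plan is to reduce everything to the belief-based picture already established. Combining \autoref{theorem:concavification} with \autoref{eq:linear-concavification} — and using that a convex hull consists of \emph{finite} convex combinations (equivalently, invoking \autoref{corollary:cardinality}) — one gets: $\allocv\in\repset$ if and only if there is a finitely supported Bayes-plausible distribution over posteriors $\tau$, with support $\{\beliefv_1,\dots,\beliefv_M\}\subseteq\Posteriors$ and weights $\tau_1,\dots,\tau_M>0$ that sum to one and satisfy $\sum_{m}\tau_m\beliefv_m=\priorv$, such that $\allocv=\dprior\,\expect_\tau[\beliefv\beliefv^T]\,\linearv$. So the theorem reduces to the claim that, as $\tau$ ranges over such distributions, the matrices $\expect_\tau[\beliefv\beliefv^T]$ are \emph{exactly} the completely positive matrices $\cpmat$ with $\cpmat\ones=\priorv$.

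For the ``only if'' direction I would simply read off $\cpmat\equiv\expect_\tau[\beliefv\beliefv^T]=\sum_m\tau_m\beliefv_m\beliefv_m^T=\sum_m(\sqrt{\tau_m}\,\beliefv_m)(\sqrt{\tau_m}\,\beliefv_m)^T$, a sum of rank-one terms built from nonnegative vectors, hence completely positive by definition; and $\cpmat\ones=\expect_\tau[\beliefv(\beliefv^T\ones)]=\expect_\tau[\beliefv]=\priorv$ since $\beliefv^T\ones=1$ for every $\beliefv\in\Posteriors$ and $\tau$ is Bayes-plausible. Then $\allocv=\dprior\cpmat\linearv$ with $\cpmat$ of the required form.

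For the ``if'' direction I would start from a decomposition $\cpmat=\sum_{m=1}^M\cvector_m\cvector_m^T$ with $\cvector_m\in\mathbb{R}_+^N$, discard the zero vectors so that $c_m\equiv\ones^T\cvector_m>0$ for all $m$ (nonnegativity of $\cvector_m$ rules out $\cvector_m\neq0$ with $c_m=0$), and set $\beliefv_m\equiv\cvector_m/c_m\in\Posteriors$ and $\tau_m\equiv c_m^2$. The three bookkeeping checks are: (i) $\sum_m\tau_m=\sum_m c_m^2=\ones^T(\sum_m\cvector_m\cvector_m^T)\ones=\ones^T\cpmat\ones=\ones^T\priorv=1$, so $\tau$ is a probability distribution on $\{\beliefv_m\}_m$; (ii) $\sum_m\tau_m\beliefv_m=\sum_m c_m\cvector_m=\sum_m(\ones^T\cvector_m)\cvector_m=\cpmat\ones=\priorv$, so $\tau$ is Bayes-plausible; (iii) $\expect_\tau[\beliefv\beliefv^T]=\sum_m c_m^2(\cvector_m\cvector_m^T/c_m^2)=\cpmat$. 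By \cite{kage11}, the Bayes-plausible $\tau$ is induced by some \informationpolicy\ \policy, and then \autoref{eq:linear-concavification} yields $\dprior\cpmat\linearv=\dprior\,\expect_\tau[\beliefv\beliefv^T]\,\linearv=\expect_\tau[\Reputationv(\beliefv)]$, an element of \repset.

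There is no deep obstacle here; the argument is short once \autoref{theorem:concavification} is in place. The step that most needs care is the normalization in the reverse direction: the weights $\tau_m=c_m^2$ must sum to one, which is not automatic — it is precisely equivalent to $\ones^T\cpmat\ones=1$, so it is exactly the hypothesis $\cpmat\ones=\priorv$ together with $\ones^T\priorv=1$ (the prior being a probability vector) that makes the construction close. It is also worth noting, though not strictly needed, that the resulting $\allocv$ does not depend on which completely positive decomposition of $\cpmat$ one picks, and that any such $\cpmat$ is automatically symmetric and positive semidefinite, so that $\cpmat^T\ones=\priorv$ as well.
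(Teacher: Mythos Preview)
Your proof is correct and follows essentially the same route as the paper's own argument: in one direction you set $\cpmat=\sum_m\tau_m\beliefv_m\beliefv_m^T$, and in the other you normalize the vectors in a completely positive decomposition to beliefs with weights given by the squared column sums, then verify Bayes-plausibility. Your bookkeeping is slightly more explicit (discarding zero vectors, the matrix-algebra check $\sum_m c_m^2=\ones^T\cpmat\ones$), but the substance is the same.
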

Putting together the properties in \autoref{theorem:linear-concavification}, we obtain that any \bprofile\ $\allocv$ is the product of the reputation vector \linearv\ and a matrix \transmat, where $\transmat\equiv \dprior\cpmat$ is the transition matrix of a reversible Markov chain with invariant distribution \priorv. That is, (i) $\priorv^T\transmat=\priorv^T$, (ii) $\transmat \ones=\ones$,  and (iii) $\transmat$ satisfies the \emph{detailed balance conditions}: for all $i,j\in N$, $\upmu_{0i}\transmat_{ij}=\upmu_{0j}\transmat_{ji}$. The first property captures the pure redistribution of welfare highlighted in \autoref{eq:linear-total-welfare}. The second property implies any \bprofile\ can be viewed as a \emph{garbled} version of the full information profile \linearv. The third property delineates the limits of how payoffs can be redistributed by linking how much of $\linear(\type_i)$ can be attributed to $\type_j$, and vice versa. Indeed, because \transmat\ is the transition matrix of a reversible Markov chain, we obtain that there is \emph{mean reversion} in the redistribution of payoffs across types. To see this, note that if $\allocv=\transmat\linearv\in\repset$, then also $\transmat\allocv\in\repset$.\footnote{ $\transmat^2\ones=\transmat \ones=\ones$ and $\transmat^2=\dprior\cpmat^\prime$, where $\cpmat^\prime\equiv \cpmat \dprior\cpmat$ is completely positive because $\cpmat$ is symmetric.} Because \priorv\ is the invariant distribution of \transmat, we have that $\transmat^k\allocv\rightarrow_{k\rightarrow\infty}(\priorv^T\allocv)*\ones=(\priorv^T\linearv)*\ones=\allocv^{ND}$, where $\allocvN$ is the no-disclosure profile.

\begin{remark}[Connections to the literature]
Reversible Markov chains are prominent in the study of higher-order beliefs and expectations of higher-order beliefs \citep{samet1998iterated,cripps2008common,golub2017higher}. Indeed, note that when the welfare function is linear and type independent, a \bprofile\ is a vector of second-order expectations: for any type \type\ and any $\allocv\in\repset$, $\welfare(\type)$ is the expectation under some information structure of the random variable $\beliefv^T\linearv$ of an \agent\ that knows \type. Whereas that literature takes the \informationpolicy\ as given and shows  (sequences of) higher-order expectations can be obtained by iteratively applying the transition matrix of a reversible Markov chain, \autoref{theorem:linear-concavification} identifies \emph{which} transition matrices are consistent with \emph{some} information structure and shows complete positivity is the key property they must satisfy.

\autoref{theorem:linear-concavification} also relates to the literature on majorization \citep{hardy1952inequalities}: if all types are equally likely,  \transmat\ is doubly stochastic and \linearv\ majorizes \allocv. However, not any profile majorized by \linearv\ is a \bprofile, because not all doubly stochastic matrices are symmetric, and hence, some do not satisfy the detailed balance conditions.
\end{remark}

We now show how \autoref{theorem:linear-concavification} delivers a more general version of the truth-drifting property discussed in \autoref{sec:main}. This property has been obtained in different forms in the literature that studies
 the feasible evolution of beliefs (e.g., \citealp{frkr14}, \citealp{hart2020posterior}). Truth-drifting states that whereas an \informationpolicy\ can occasionally ``deceive'' the outside observer about an \agent's true type, it cannot systematically do so. This property underlies the limits of using information as a tool to distribute welfare in the population.
%

Formally, consider any event $\event$ that is correlated with the types according to the conditional probability function $\eventcorr\in[0,1]^N$, $\eventcorr_i\equiv\Pr(\event\mid \type_i)$, so that the prior probability of the event is $\Pr(\event)=\priorv^T\eventcorr$.\footnote{For concreteness, \event\ can be seen as a subset of $\types\times[0,1]$ equipped with a probability measure that agrees with \prior\ on \types\ \citep{grst78,geka17}.} If all $\eventcorr_i\in\{0,1\}$, the event effectively indicates a subset of types. More generally, the event may involve extraneous uncertainty, and the types may be only imperfectly informative about it. We show that if the event is true, the average posterior probability that the outside observer attaches to this event must be at least as large as the prior probability:
\begin{claim}[Truth drifting]\label{claim:truth-drifting}
For any event $\event$ and \informationpolicy\ \policy,
\begin{align*}
    \expect_{\policy}\left[\Pr(\event\mid\signal)\mid\event\right]\geq \Pr(\event).
\end{align*}
\end{claim}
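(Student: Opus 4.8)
The plan is to reduce \autoref{claim:truth-drifting} to Jensen's inequality for the convex function $x\mapsto x^{2}$, applied to the random variable $\Pr(\event\mid\signal)$ under the signal distribution induced by \policy. I will assume throughout that $\Pr(\event)>0$, so that conditioning on \event\ is well defined.

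\textbf{Setting up the probability space.} Following the footnote, realize \event\ as a subset of $\types\times[0,1]$ under the product measure that agrees with \prior\ on \types; then $\Pr(\event\mid\type_i)=\eventcorr_i$, and because an \informationpolicy\ conditions only on the type, the signal realization \signal\ and the event \event\ are conditionally independent given \type. Hence the joint law of $(\type,\signal,\event)$ factors as $\prior(\type_i)\,\map(\signal\mid\type_i)\,\eventcorr_i$. In particular the marginal of \signal\ is the usual $\Pr\nolimits_\policy(\signal)=\sum_i\prior(\type_i)\map(\signal\mid\type_i)$, and by Bayes' rule
\[
\Pr(\event\mid\signal)=\frac{\sum_i\prior(\type_i)\,\map(\signal\mid\type_i)\,\eventcorr_i}{\Pr\nolimits_\policy(\signal)}=\sum_i\belief_\signal(\type_i)\,\eventcorr_i=\belief_\signal^{T}\eventcorr,
\]
so $\Pr(\event\mid\signal)$ is linear in the induced posterior with ``reputation vector'' \eventcorr, and Bayes plausibility gives $\expect_\policy[\Pr(\event\mid\signal)]=\priorv^{T}\eventcorr=\Pr(\event)$.

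\textbf{The inequality.} By Bayes' rule, $\Pr(\signal\mid\event)=\Pr(\event\mid\signal)\,\Pr\nolimits_\policy(\signal)/\Pr(\event)$, so
\[
\expect_\policy\!\left[\Pr(\event\mid\signal)\mid\event\right]=\sum_{\signal\in\signals}\Pr(\signal\mid\event)\,\Pr(\event\mid\signal)=\frac{1}{\Pr(\event)}\sum_{\signal\in\signals}\Pr\nolimits_\policy(\signal)\,\Pr(\event\mid\signal)^{2}=\frac{\expect_\policy\!\left[\Pr(\event\mid\signal)^{2}\right]}{\Pr(\event)}.
\]
Jensen's inequality then gives $\expect_\policy[\Pr(\event\mid\signal)^{2}]\ge\big(\expect_\policy[\Pr(\event\mid\signal)]\big)^{2}=\Pr(\event)^{2}$, so the last expression is at least $\Pr(\event)$, which is the claim.

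\textbf{An alternative via the linear machinery, and the main subtlety.} One can instead read \eventcorr\ as a reputation vector: recalling that for any policy the interim profile can be written $\allocv=\dprior\expect_\tau[\beliefv\beliefv^{T}]\linearv$ (with $\linearv$ replaced by $\eventcorr$), the profile $\big(\expect_\policy[\Pr(\event\mid\signal)\mid\type_i]\big)_i$ equals $\dprior\cpmat\,\eventcorr$ for the completely positive matrix $\cpmat=\expect_\tau[\beliefv\beliefv^{T}]$ attached to \policy, which obeys $\cpmat\ones=\priorv$. Since $\Pr(\type_i\mid\event)=\prior(\type_i)\eventcorr_i/\Pr(\event)$ and \signal\ is conditionally independent of \event\ given \type, averaging over types gives $\expect_\policy[\Pr(\event\mid\signal)\mid\event]=\eventcorr^{T}\cpmat\,\eventcorr/\Pr(\event)$; writing $\cpmat=\sum_m\vectorsv_{\bs m}\vectorsv_{\bs m}^{T}$ and using $\onesT\cpmat\ones=\priorv^{T}\ones=1$ and $\onesT\cpmat\,\eventcorr=\priorv^{T}\eventcorr=\Pr(\event)$, Cauchy--Schwarz for the positive-semidefinite form \cpmat\ yields $\eventcorr^{T}\cpmat\,\eventcorr\ge(\onesT\cpmat\,\eventcorr)^{2}=\Pr(\event)^{2}$, recovering the bound. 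Either way, the only delicate point is the modeling of \event: one must justify that the signal is conditionally independent of \event\ given the type, so that conditioning on \event\ is exactly a \eventcorr-reweighting of the type distribution and does not otherwise interact with the chosen \informationpolicy. Granting that, the result is immediate from convexity of $x\mapsto x^{2}$ (equivalently, Cauchy--Schwarz), and I foresee no further difficulty.
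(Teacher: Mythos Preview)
Your proof is correct. The paper takes a somewhat different and heavier route: it expresses $\expect_\policy[\Pr(\event\mid\signal)\mid\type_i]$ via the Markov matrix $\transmat=\dprior\cpmat$ of \autoref{theorem:linear-concavification}, rewrites the claim as the quadratic inequality $\zetav^{T}\cpmathat\,\zetav\ge\priorv^{T}\cpmathat\,\priorv$ with $\cpmathat=\dprior\cpmat\dprior$ and $\zetav=(\priorv*\eventcorr)/(\priorv^{T}\eventcorr)\in\Delta(\types)$, and then argues via a Lagrangian that $\priorv$ minimizes $\zetav\mapsto\zetav^{T}\cpmathat\,\zetav$ over the simplex (the gradient there is collinear to $\ones$, and positive semidefiniteness of $\cpmathat$ gives the second-order condition). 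Your primary argument bypasses all of this machinery: once one observes that $\Pr(\signal\mid\event)=\Pr(\event\mid\signal)\Pr\nolimits_\policy(\signal)/\Pr(\event)$, the inequality is literally $\expect[\Pr(\event\mid\signal)^2]\ge(\expect[\Pr(\event\mid\signal)])^2$, i.e.\ Jensen for $x\mapsto x^2$. Your alternative via Cauchy--Schwarz on the PSD form $\cpmat$ is the same inequality as the paper's (after one unwinds $\zetav^{T}\cpmathat\,\zetav=\eventcorr^{T}\cpmat\,\eventcorr/\Pr(\event)^2$), but reached more directly than by the Lagrangian. The gain from your approach is simplicity and independence from \autoref{theorem:linear-concavification}; the paper's version, on the other hand, makes explicit how the claim is a consequence of the completely-positive-matrix characterization developed in \autoref{sec:linear}.
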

\cite{frkr14} obtain this result, relying on the properties of Kullback-Leibler divergence.\footnote{In a setting in which information respects the state space's ordinal structure, \cite{kartik2021information} formalize the sense in which the drift toward the truth is stronger for more Blackwell informative \informationpolicies.} \cite{hart2020posterior} obtain a version of 
\autoref{claim:truth-drifting} in the special case of $\event\subseteq\types$, relying on the monotone-likelihood ratio property. Instead, our proof of \autoref{claim:truth-drifting}, presented in \autoref{appendix:theorem-linear-concavification}, builds on the property that the underlying matrix \cpmat\ is completely positive and thus necessarily positive semi-definite.

\paragraph{Boundary \informationpolicies:} Recall that \autoref{theorem:hyperplanes} characterizes the boundary of the \bset\ by means of supporting Bayesian persuasion problems. In the reputation model, \autoref{eq:linear-total-welfare} implies the \bset\ lies within a hyperplane with orthogonal vector $\priorv$.  Thus, instead of studying the boundary of the \bset, we focus on its \emph{relative} boundary, which consists of all \bprofiles\ not in the relative interior of the \bset.\footnote{Recall that the relative interior of a set $X$ is the interior of $X$ within its affine hull, which is the set of all affine combinations of elements in $X$.} In a slight abuse of terminology, we refer to the profiles on the relative boundary of \repset\ and the \informationpolicies\ that induce them as \emph{boundary} profiles and \informationpolicies, respectively. 


As we show next, the supporting Bayesian persuasion problems in the reputation model take a well-known structure. Indeed, fix a direction $\directionv\in\reals^N\setminus\{0\}$ not collinear with $\priorv$ and consider the induced supporting Bayesian persuasion problem:
\begin{align}\label{eq:rayo-segal}\tag{RS$_\directionv$}
    \max_{\tau\in\bplausible}\expect_{\tau}\left[\directionv^T\Reputationv(\beliefv)\right]&=\max_{\tau\in\bplausible}\expect_\tau\left[\left(\frac{\directionv}{\priorv}^{T}\beliefv\right)\left(\linearv^T\beliefv\right)\right]
     \nonumber\\
  & 
 =\max_{\tau\in\bplausible}\expect_\tau\left[\expect_{\belief}\left[\frac{\direction(\type)}{\prior(\type)}\right]\expect_{\belief}\left[\linear(\type)\right]\right],\nonumber
\end{align}
where the first equality uses the form of $\reputation$ and the definition of $\Reputation$. \autoref{eq:rayo-segal} shows that if an \informationpolicy\ \policy\ delivers a profile \allocv\ on the relative boundary of \repset, the \informationpolicy\ solves an instance of the information design problem in \cite{rayo2010optimal}. To be precise, \cite{rayo2010optimal} consider the following problem. A sender owns a prospect, and his objective is that the receiver accepts it. When the sender's type is \type\ and the receiver accepts the prospect, the sender and the receiver obtain a payoff $\gamma(\type)\equiv\direction(\type)/\prior(\type)$ and $\linear(\type)\in[0,1]$, respectively. Instead, if the receiver rejects the prospect, the sender obtains a payoff of $0$, whereas the receiver obtains a payoff $u$ distributed uniformly over $[0,1]$ independently of $\type$. The sender chooses an \informationpolicy, \policy, without observing the realization of $u$. Thus, when \policy\ induces a belief \beliefv,  the sender expects  the receiver to accept the project with probability, $\linearv^T\beliefv$. It follows that the last term in \autoref{eq:rayo-segal} represents the sender's expected payoff when \bsplit\ is the distribution over posteriors induced by \informationpolicy\ \policy.

\begin{proposition}\label{proposition:linear-hyperplanes}
\emph{(Boundary profiles)} 
A welfare profile \allocv\ is on the relative boundary of \repset\ if and only if a direction $\directionv\in\reals^N\setminus\{0\}$ not collinear with \prior\ exists such that $\allocv$ is induced by an \informationpolicy\ that solves the program \ref{eq:rayo-segal}.
%
\end{proposition}
\autoref{proposition:linear-hyperplanes} allows us to rely on the approach of \cite{rayo2010optimal} to characterize the \emph{shape} of the \informationpolicies\ that achieve the boundary \bprofiles. This approach relies on a graphical representation of an \informationpolicy, in which the prospect values $\{(\gamma(\type),\linear(\type)):\type\in\types\}$ are the nodes (see \autoref{remark:rs-appendix} in the appendix). 
The results in \cite{rayo2010optimal} have immediate implications for the information structures that induce the boundary profiles of \repset:
\begin{corollary}\label{corollary:linear-pareto} In the reputation model, the following hold:
\begin{enumerate}
\item An \informationpolicy\ that induces a boundary \bprofile\ in the direction $\directionv$ does not pool types $\type_i$ and $\type_j$  whenever their ranking under the vector $\directionv/\priorv$ and the reputation vector  \linearv\ is the same;
\item The full- and no-disclosure profiles are on the relative boundary of \,\repset.
\end{enumerate}
\end{corollary}
The first part of \autoref{corollary:linear-pareto} highlights a natural feature of optimal information provision in the reputation model in terms of the alignment of preferences of the social planner, captured by $\directionv/\priorv$, and of the outside observer, captured by \linearv: the planner should not pool any two types as long as the planner and the outside observer are in agreement about the types' relative ranking. The second part of \autoref{corollary:linear-pareto} shows that the full- and no-disclosure profiles are on the boundary of the \bset. Whereas this property holds in Examples \ref{example:privacy} and \ref{example:platforms}, in which the welfare function is not linear, this property is not a general one, as we illustrate in \autoref{example:fd-nd-not-bd} in the appendix.


%
%
\paragraph{Individual reputation bounds:} We can further build on the graphical approach of \cite{rayo2010optimal} to characterize the \informationpolicies\ that deliver maximal (or minimal) welfare to any given type. This exercise provides a rough way to bound the \bset\ \repset\ and also suggests how information  may be employed to boost (or dilute) the reputation of particular types in the population. In the context of our credit agency example, \autoref{example:credit}, the \informationpolicy\ that maximizes the welfare of individuals of type \targettype\ maximizes the probability that \agents\ of type $\type_i$ obtain credit.

Formally, given a target type \targettype, we want to solve  the following problem:
\begin{align}\label{eq:max-i}\tag{$i$-MAX}
    \max_{\allocv\in\repset}\allocvi.
\end{align}
\autoref{proposition:bi-pooling} below shows a particular class of \informationpolicies\ solves the problem \ref{eq:max-i}.
%
\begin{definition}[Noisy priority]\label{definition:bipooling}
A \emph{\targettype-noisy-priority policy with threshold $k$} is an \informationpolicy\ $(\excdf,\signals)$ such that
$\signals=\types$, and the likelihood function \excdf\ satisfies:
\begin{enumerate}
    \item If $j\neq i$, $\pi(s=\type_j\mid\type_j)=1$,
    \item If $j<k$, $\pi(s=\type_j\mid\type_i)=0$, and
    \item If $j\geq k$, $\pi(s=\type_j\mid\type_i)>0$.
\end{enumerate}
\end{definition}
In other words, a \targettype-noisy-priority policy pairwise pools the target type $\targettype$ with all types with indices above some threshold and separates all other types. A \bipooling\ has an implementation akin to the priority mechanisms in the matching literature \citep{celebi2022adaptive}, and hence its name. A \bipooling\ with threshold $k\geq i$ can be implemented by first assigning a perfectly revealing score to each type equal to their index, and then prioritizing the target type \targettype\ by increasing this type's score by a random number. 
\begin{proposition}\label{proposition:bi-pooling} The \bprofile\ that solves \ref{eq:max-i} is induced by a \targettype-\bipooling\ with threshold $k\geq i$.
\end{proposition}
The proof is in \autoref{appendix:theorem-linear-concavification}. One part of \autoref{proposition:bi-pooling} is straightforward: if one wishes to increase the reputation of $\type_i$, then $\type_i$ should be separated from all types with lower indices. What might be less obvious is that whenever $\type_i$ is pooled with some other type, $\type_i$ should be pooled with it pairwise. In a sense, pooling several types together redistributes the reputation from higher-quality types to lower-quality types. Pairwise pooling then allows the target type to obtain maximal reputation gains from any other type without sharing the gains with others. Finally, pairwise pooling with many types ensures no signal is overly ``muddled,'' which in turn ensures an overall high reputation for \targettype.

By simply reversing signs, \autoref{proposition:bi-pooling} can be used to characterize the \informationpolicy\ that minimizes the expected reputation of \agents\ of type \targettype: this information structure should pairwise pool the target type with types whose indices are below some threshold. Such adversarial pairwise pooling inflicts maximal reputation losses and can be viewed as a \emph{noisy-degrading policy}.

We conclude this section by illustrating the \bset\ in the context of \autoref{example:credit} and highlight the importance of population heterogeneity as captured by the number of types.
\begin{example}[continued]\label{example:linear} \autoref{fig:linear-2} illustrates the results of this section in the context of \autoref{example:credit}. Recall that in this case $\linear(\type)$ denotes the probability that an \agent\ of type \type\ repays the loan, and hence, $\welfare(\belief,\type)$ is the expected repayment probability under belief \belief. Like in \cite{rayo2010optimal}, we assume the credit agency has a uniform outside option. Assuming \agents\ wish to maximize the probability the lending agency approves the loan justifies that $\linearv^T\beliefv$ corresponds to their welfare. 

\autoref{fig:linear-2-2} depicts the individually feasible welfare profiles (dashed square) and the \bset\ (blue line) in the case of $N=2$. \autoref{proposition:bounds} implies any payoff between $\linearv_1=0$ and $\priorv^T\linearv$ is feasible for $\type_1$, whereas any payoff between $\priorv^T\linearv$ and $\linearv_2=1$ is feasible for $\type_2$. As \autoref{fig:linear-2-2} illustrates, the Cartesian product $[\linearv_1,\priorv^T\linearv]\times[\priorv^T\linearv,\linearv_2]$  is a rather lax bound in this example. In particular, the Cartesian product $[\linearv_1,\priorv^T\linearv]\times[\priorv^T\linearv,\linearv_2]$ ignores that all \bprofiles\ satisfy $\priorv^T\allocv=\priorv^T\linearv=0.5$ (\autoref{eq:linear-total-welfare}). In the case of $N=2$, adding this restriction is enough to pin down the \bset. The reason is that by \autoref{theorem:linear-concavification}, all \bprofiles\ can be obtained by ``garbling'' the full-disclosure \bprofile, \linearv, and in the case of binary types, this garbling turns out to span a linear segment. Finally, note the structure of the \bset\ implies a social planner with Pareto weights \directionv\ finds it optimal to provide no or full information, depending on whether the planner weighs the welfare of $\type_1$-\agents\ more than that of $\type_2$-\agents\ (that is, $\directionv_1\lessgtr\directionv_2$).

\begin{figure}[t!]
        \centering
\subfloat[$N=2$, $\linearv=(0,1)$, $\priorv=(1/2,1/2)$]{\scalebox{0.9}{%

    \begin{tikzpicture}[scale=0.8,xscale=1,yscale=1.2]
    \begin{axis}[xtick pos=left,
    ytick pos=left,
            xmin=-0.25,xmax=0.75,
        ymin=0.25,ymax=1.25,
        xtick={0,0.5},
        ytick={0.5,1},
        xlabel=$\allocv_1$,ylabel=$\allocv_2$
         ,
         x label style={at={(axis description cs:0.5,-0.05)},anchor=north},
    y label style={at={(axis description cs:-0.05,.5)},rotate=-90,anchor=south}
]
\addplot[mark=none, color=blue!30, very thick] coordinates{(0,1) (0.5,0.5)};
\addplot[mark=none,color=black,dashed] coordinates{(0,1) (0.5,1)};
\addplot[mark=none,color=black,dashed] coordinates{(0.5,0.5) (0.5,1)};
\addplot[mark=none,color=black,dashed] coordinates{(0,0.5) (0.5,0.5)};
\addplot[mark=none,color=black,dashed] coordinates{(0,1) (0,0.5)};
\addplot[color=blue,mark=*]coordinates{(0,1)}
[every node/.style={yshift=8pt}]
node[pos=0] {$\color{black}{\allocv^{FD}}$};
\addplot[color=red,mark=*]coordinates{(0.5,0.5)}
[every node/.style={xshift=0pt,yshift=-10pt}]
node[pos=0] {$\color{black}{\allocv^{ND}}$};
    \end{axis}
    \end{tikzpicture}    
 }\label{fig:linear-2-2}}\hspace{0.25cm}
 \subfloat[$N=3$, $\linearv=(0,0.5,1)$, $\priorv=(1/3,1/3,1/3)$]{\scalebox{0.9}{
  \begin{tikzpicture}
    \begin{axis}[xtick pos=left, ytick pos= lower, ztick pos=left, xmin=-0.05,xmax=0.6,ymin=0.2,ymax=0.8,zmin=0.4,zmax=1.05,
   xtick={0.2,0.4},ytick={0.4,0.6}, ztick={0.6,0.8,1},
    xlabel={$\allocv_1$}, ylabel={$\allocv_2$}, zlabel={$\allocv_3$},z label style={rotate=-90},view/h=80, view/v=10
    ]
    
    \addplot3[mark=none,color=black,dotted] coordinates 
{
    (0,0.25,0.471)
    (0.529,0.25,0.471)
    (0.529,0.75,0.471)
    (0,0.75,0.471)
     (0,0.25,0.471)
};
    
    \addplot3[mark=none,color=black,dotted] coordinates 
{
        (0,0.75,0.471)
           (0,0.75,1)
};
    
\addplot3[surf, mesh/interior colormap=
           {blackblack}{color=(blue!30) color=(blue!30)},
        mesh/interior colormap thresh=1, 
    shader=flat]
            coordinates {(0.5, 0.5, 0.5) (0.504753, 0.490196, 0.505051) (0.509027, 0.480769,
  0.510204) (0.512838, 0.471698, 0.515464) (0.516204, 0.462963, 
  0.520833) (0.519139, 0.454545, 0.526316) (0.521657, 0.446429, 
  0.531915) (0.523769, 0.438596, 0.537634) (0.525487, 0.431034, 
  0.543478) (0.526821, 0.423729, 0.549451) (0.527778, 0.416667, 
  0.555556) (0.528366, 0.409836, 0.561798) (0.528592, 0.403226, 
  0.568182) (0.528462, 0.396825, 0.574713) (0.52798, 0.390625, 
  0.581395) (0.527149, 0.384615, 0.588235) (0.525974, 0.378788, 
  0.595238) (0.524456, 0.373134, 0.60241) (0.522597, 0.367647, 
  0.609756) (0.520397, 0.362319, 0.617284) (0.517857, 0.357143, 
  0.625) (0.514976, 0.352113, 0.632911) (0.511752, 0.347222, 
  0.641026) (0.508184, 0.342466, 0.649351) (0.504267, 0.337838, 
  0.657895) (0.5, 0.333333, 0.666667) (0.495377, 0.328947, 
  0.675676) (0.490393, 0.324675, 0.684932) (0.485043, 0.320513, 
  0.694444) (0.479319, 0.316456, 0.704225) (0.473214, 0.3125, 
  0.714286) (0.46672, 0.308642, 0.724638) (0.459828, 0.304878, 
  0.735294) (0.452527, 0.301205, 0.746269) (0.444805, 0.297619, 
  0.757576) (0.436652, 0.294118, 0.769231) (0.428052, 0.290698, 
  0.78125) (0.418993, 0.287356, 0.793651) (0.409457, 0.284091, 
  0.806452) (0.399429, 0.280899, 0.819672) (0.388889, 0.277778, 
  0.833333) (0.377817, 0.274725, 0.847458) (0.366192, 0.271739, 
  0.862069) (0.35399, 0.268817, 0.877193) (0.341185, 0.265957, 
  0.892857) (0.327751, 0.263158, 0.909091) (0.313657, 0.260417, 
  0.925926) (0.298872, 0.257732, 0.943396) (0.283359, 0.255102, 
  0.961538) (0.267083, 0.252525, 0.980392) (0.25, 0.25, 1.) 
  (0, 0.5, 1)
  
(0.5, 0.5, 0.5) (0.494949, 0.509804, 0.495247) (0.489796, 0.519231,
  0.490973) (0.484536, 0.528302, 0.487162) (0.479167, 0.537037, 
  0.483796) (0.473684, 0.545455, 0.480861) (0.468085, 0.553571, 
  0.478343) (0.462366, 0.561404, 0.476231) (0.456522, 0.568966, 
  0.474513) (0.450549, 0.576271, 0.473179) (0.444444, 0.583333, 
  0.472222) (0.438202, 0.590164, 0.471634) (0.431818, 0.596774, 
  0.471408) (0.425287, 0.603175, 0.471538) (0.418605, 0.609375, 
  0.47202) (0.411765, 0.615385, 0.472851) (0.404762, 0.621212, 
  0.474026) (0.39759, 0.626866, 0.475544) (0.390244, 0.632353, 
  0.477403) (0.382716, 0.637681, 0.479603) (0.375, 0.642857, 
  0.482143) (0.367089, 0.647887, 0.485024) (0.358974, 0.652778, 
  0.488248) (0.350649, 0.657534, 0.491816) (0.342105, 0.662162, 
  0.495733) (0.333333, 0.666667, 0.5) (0.324324, 0.671053, 
  0.504623) (0.315068, 0.675325, 0.509607) (0.305556, 0.679487, 
  0.514957) (0.295775, 0.683544, 0.520681) (0.285714, 0.6875, 
  0.526786) (0.275362, 0.691358, 0.53328) (0.264706, 0.695122, 
  0.540172) (0.253731, 0.698795, 0.547473) (0.242424, 0.702381, 
  0.555195) (0.230769, 0.705882, 0.563348) (0.21875, 0.709302, 
  0.571948) (0.206349, 0.712644, 0.581007) (0.193548, 0.715909, 
  0.590543) (0.180328, 0.719101, 0.600571) (0.166667, 0.722222, 
  0.611111) (0.152542, 0.725275, 0.622183) (0.137931, 0.728261, 
  0.633808) (0.122807, 0.731183, 0.64601) (0.107143, 0.734043, 
  0.658815) (0.0909091, 0.736842, 0.672249) (0.0740741, 0.739583, 
  0.686343) (0.0566038, 0.742268, 0.701128) (0.0384615, 0.744898, 
  0.716641) (0.0196078, 0.747475, 0.732917) (0., 0.75, 0.75) 
  (0, 0.5, 1)
  };
\addplot3[mark=none,color=black,dotted] coordinates 
{
    (0,0.25,1)
    (0.529,0.25,1)
    (0.529,0.75,1)
    (0,0.75,1)
     (0,0.25,1)
};
\addplot3[mark=none,color=black,dotted] coordinates 
{
  (0,0.25,0.471)
     (0,0.25,1)
};
\addplot3[mark=none,color=black,dotted] coordinates 
{
     (0.529,0.25,0.471)
        (0.529,0.25,1)
};
\addplot3[mark=none,color=black,dotted] coordinates 
{
        (0.529,0.75,0.471)
           (0.529,0.75,1)
};

    \addplot3[color=blue,mark=*]coordinates{(0,0.5,1)}
[every node/.style={yshift=4pt}]
node[pos=0] {$\color{black}{\allocv^{FD}}$};
\addplot3[color=red,mark=*]coordinates{(0.5,0.5,0.5)}
[every node/.style={xshift=10pt,yshift=-15pt}]
node[pos=0] {$\color{black}{\allocv^{ND}}$};
    
    \end{axis}

    \end{tikzpicture}
    }\label{fig:linear-3}}
 
        \caption{Expected reputation in \autoref{example:credit}. The blue color marks   \bset\ \repset. The dashed segments outline the Cartesian product of individual welfare sets.}\label{fig:linear-2}
            \end{figure}

\autoref{fig:linear-3} depicts the \bset\ \repset\ in the case of $N=3$. In contrast to the binary-type case, the boundary of the \bset\ is non-linear and features a continuum of extreme points. This example illustrates that the constraints imposed by Bayes plausibility are richer than the simple garbling constraint that characterizes the set when $N=2$. As we show in \autoref{appendix:theorem-linear-concavification}, four classes of \informationpolicies\ span the boundary of the \bset. The first two are the $\type_1$-noisy-priority and $\type_3$-noisy-degrading policies, which span the nonlinear segment of the boundary. The other two span the linear segments in \autoref{fig:linear-3} and similar to the $\type_3$-noisy-priority and the $\type_1$-noisy-degrading policies, either maximize the loan-approval probability of $\type_3$-individuals, by separating them from $\type_1$- and $\type_2$-\agents, or minimize the loan-approval probability of $\type_1$-individuals, by separating them from $\type_2$- and $\type_3$-\agents. Unlike the $\type_3$-noisy-priority and the $\type_1$-noisy-degrading policies, the two classes of information structures that span the linear segments may pool together the types below $\type_3$ or above $\type_1$, respectively. Notably, for almost all boundary points, $\type_2$-\agents\ are pooled with \agents\ of some other type. Intuitively, $\type_2$-\agents\ exert \emph{pooling externalities} on \agents\ of types $\type_1$ and $\type_3$.\footnote{We follow the terminology in \cite{galperti2022value}, who highlight that \agents\ with certain types are valuable precisely because of the possibility of pooling them with other types.} For instance, $\type_2$-\agents\ enable boosting the loan-approval probability of $\type_1$-\agents\ in the case of the $\type_1$-\bipooling, but exert a negative externality on $\type_3$-\agents\ in the $\type_3$-noisy-degrading policy.\hfill$\closing$
\end{example}

\section{Conclusion}\label{sec:conclusion}\color{black}
We provide a framework to study the potentially disparate impact of information policies in a population of heterogeneous \agents. Because information policies increasingly shape society's choices in high-stakes domains, the \bset\ describes the limits of what society can achieve under such policies and the welfare trade-offs implied by the choice between different information policies. In the spirit of mechanism design and information design, our characterization of the \bset\ provides a unifying tool to evaluate the welfare implications of different information policies across a wide array of objective functions. 

We see several avenues worth exploring and left for future work. First, our model assumes  any information can be provided about an \agent's payoff-relevant \characteristic. However, this assumption does not necessarily hold in applications of interest in which an \agent's type may include protected characteristics. In the online appendix, we extend our framework to accommodate limits on how much information can be disclosed about the \agents\ in the population. The analysis there, however, does not consider that these limits may be designed when a potentially malicious third party selects the \informationpolicy. \cite{liang2022algorithmic} consider this case in the context of  decision-making algorithms, and we expect their insights to extend to the case of recommendations algorithms like the ones we consider. 

Second, because the welfare function depends only on the first-order beliefs about an \agent's type, our model only accounts for strategic interactions that follow the realization of a public signal (see, e.g., \citealp{laclau2017public}). Extending the analysis to account for general strategic interactions is worth exploring. \cite{galperti2022value}, which studies the \bprofile\ that gives rise to the sender's maximum average payoff across all Bayes correlated equilibria, is a step in this direction.


Third, motivated by recent policies, \cite{tirl20} studies the use of information in the form of a social score to incentivize good behavior in the population. Whereas \cite{tirl20} studies this question in the context of a parametric family of information structures, our initial explorations show the \bset\ allows us to extend his results by allowing \emph{any} information structure. More generally, information has been suggested as a substitute for monetary incentives, and the \bset\ describes what can be achieved with information alone. 

Finally, whereas we characterize \agents' welfare as a function of their type,  thinking of applications in which we care instead about the welfare of \emph{groups} is natural. For instance, an \agent's type could encompass their gender and their ability, and the social planner is concerned with the welfare different genders may obtain. 
This extension can inform the study of statistical discrimination, where recent work shows Bayesian persuasion tools can shed new light to this problem \citep{chambers2021characterisation,escude2022statistical,deb2022which}. Whereas much of the existing literature focuses on statistical properties of discrimination, our work highlights the important aspect of economic welfare, offering a complementary perspective that enriches the ongoing dialogue between statistics and economics.
\color{black}

\bibliographystyle{ecta}

\bibliography{general}
\appendix
\section{Omitted results and proofs from the main text}
\subsection{Omitted proofs from \autoref{sec:main}}
For completeness, we include a proof of \autoref{claim:accounting}, which follows from the analysis in \citet[pp.~683--684]{alonso2016bayesian}, in present notation:
\begin{proof}[Proof of \autoref{claim:accounting}]
For an \informationpolicy\ \policy,  let $\mathrm{supp}\unconditional$ denote the support of the distribution over posterior beliefs induced by \policy\ and let $\mathrm{Pr}_\policy(s)$ denote the unconditional probability of signal $s$ under the prior distribution $\prior$, i.e., $\mathrm{Pr}_\policy(s)=\sum_{\type\in\Types}\prior(\type)\excdf(s|\type)$. Then, for a given type \type, their welfare under \informationpolicy\ \policy\ can be written as follows:
\begin{align}\label{eq:accounting-app}
\welfare_\policy(\type)=\expect_{\conditional}\left[\reputation(\beliefc, \type)\right]&=\sum_{\belief\in \mathrm{supp}\unconditional}\sum_{\signal\in S:\belief_\signal=\belief}\pi(\signal|\type)\reputation(\belief,\type)\\
&=\sum_{\belief\in\mathrm{supp}\unconditional}\sum_{\signal\in S:\belief_\signal=\belief}\Pr\nolimits_{\policy}(\signal)\frac{1}{\prior(\type)}\frac{\prior(\type)\pi(\signal|\type)}{\Pr\nolimits_{\policy}(\signal)}\reputation(\belief,\type)\nonumber \\ 
&=\sum_{\belief\in\mathrm{supp}\unconditional}\sum_{\signal\in S:\belief_\signal=\belief}\Pr\nolimits_{\policy}(\signal)\frac{\belief(\type)}{\prior(\type)}\reputation(\belief,\type)\nonumber\\
&=\sum_{\belief\in\mathrm{supp}\unconditional}\sum_{\signal\in S:\belief_\signal=\belief}\Pr\nolimits_{\policy}(\signal)\Reputation(\belief,\type)=\expect_{\unconditional}\left[\Reputation(\beliefc,\type)\right].\nonumber
\end{align}
\end{proof}
\begin{proof}[Proof of \autoref{theorem:concavification}]
By definition, the point $\left(\priorv,\allocv\right)\in\mathrm{co}\left(\mathrm{graph\,}\Reputationv\right)$ if and only if a Bayes plausible distribution over posteriors \bsplit\ exists such that $\expect_\bsplit[\Reputationv\left(\belief\right)]=\allocv$. At the same time, the distribution over posteriors  induced by an \informationpolicy, \unconditional, is Bayes plausible, i.e., $\expect_{\unconditional}[\tilde{\belief}]=\prior$.
The result follows from \autoref{claim:accounting}.
\end{proof}
\paragraph{On closedness} \autoref{proposition:closed-bset} collects the results discussed in \autoref{sec:main}. To introduce \autoref{proposition:closed-bset}, we first introduce the analogue of the \bset\ when the population's welfare depends on the outside observer's actions and together with the \informationpolicy, we can flexibly choose a selection from the outside observer's best-response correspondence. Formally, upon observing the realization from the \informationpolicy, the outside observer takes an action in a finite set \actions\ to maximize their expected payoff. We denote the outside observer's utility by $u:\actions\times\types\mapsto\reals$ and the welfare function by $v:\actions\times\types\mapsto\reals$. Given an \informationpolicy, \policy, let $\mathrm{supp}\unconditional$ denote the support of the distribution over posteriors \unconditional. For each $\belief_s\in\mathrm{supp}\unconditional$, let 
\begin{align}\label{eq:best-response}
\selection(\belief_s)\in\Delta\left(\arg\max_{\aaction\in\actions}\sum_{\type\in\Types}\belief_s(\type)u(\aaction,\type)\right)\equiv\Delta\left(\aaction^*(\belief)\right),
\end{align}
denote the outside observer's (possibly mixed) best response. Recall that the Theorem of the Maximum implies that the correspondence $\aaction^*(\belief)$ is upper-hemicontinuous.

Denote  by $F$ set of tuples $(\policy,\selection)$, such that selection \selection\ satisfies \autoref{eq:best-response}. Each $(\policy,\selection)\in F$ defines a welfare profile, $\welfare_{\policy,\selection}:\Types\mapsto\reals^N$, such that
\begin{align*}
    \welfare_{\policy,\selection}(\type)=\sum_{s\in S}\excdf(s|\type)\sum_{\aaction\in\actions}\selection(\belief_s)(\aaction)v(\aaction,\type)=\mathbb{E}_{\conditional}\left[\sum_{\aaction\in\actions}\selection(\beliefc)(\aaction)v(\aaction,\type)\right].
\end{align*}

The analogue of the \bset, which we denote by \repsetbp, is then
\begin{align*}
    \repsetbp=\left\{\welfarev\in\reals^N:\left(\exists (\policy,\selection)\in F\right)\text{ s.t. }\welfarev_i=\welfare_{\policy,\selection}(\type_i)\; \forall i\in\{1,\dots,N\}\right\}.
\end{align*}

We have the following result:
\begin{proposition}\label{proposition:closed-bset}
    The following hold:
    \begin{enumerate}[label=(\alph*)]
        \item\label{itm:continuity} If for each $\type\in\Types$, the welfare function $\welfare(\cdot,\type)$ is continuous on \Posteriors, the \bset\ \repset\ is closed.
        \item\label{itm:bp} The set \repsetbp\ is closed.
    \end{enumerate}
\end{proposition}

    

\begin{proof}[Proof of \autoref{proposition:closed-bset}]
The proof of part \ref{itm:continuity} is immediate and hence omitted. Consider then part \ref{itm:bp}. Define the analogue of the truth-adjusted welfare function $\hat{v}(\aaction,\belief,\type)$ to be
\begin{align*}
    \hat{v}(\aaction,\belief,\type)=\frac{\belief(\type)}{\prior(\type)}v(\aaction,\type).
\end{align*}
The same arguments as in \autoref{claim:accounting} imply that for any pair $(\policy,\selection)\in F$ and all $\type\in\Types$,
\begin{align*}
    \welfare_{\policy,\selection}(\type)=\mathbb{E}_{\unconditional}\left[\sum_{\aaction\in\actions}\selection(\beliefc)(\type)\hat{v}(\aaction,\beliefc,\type)\right].
\end{align*}
Define the correspondence $\vcorr:\Posteriors\rightrightarrows\reals^N$ as follows
\begin{align}
    \vcorr(\belief)=\mathrm{co}\left\{\mathrm{\hat{v}}(\aaction,\belief):\aaction\in\aaction^*(\belief)\right\}.
\end{align}
Observe that $\vcorr$ is a non-empty valued, convex-valued, and compact-valued correspondence. Furthermore, $\vcorr$ is upper-hemicontinuous. The first part follows immediately from noting that $\vcorr$ is the convex hull of finitely many vectors in $\reals^N$--$\aaction^*(\belief)$ is non-empty--and $\hat{v}$ is bounded. Upper-hemicontinuity of $\vcorr$ follows from continuity of \vrep\ in $\belief$ and upper-hemicontinuity of $\aaction^*(\belief)$. Consequently, the graph of $\vcorr$,
\begin{align*}
    \mathrm{graph}\; \vcorr=\left\{(\beliefv,\mathrm{v})\in\Posteriors\times\reals^N:\mathrm{v}\in \vcorr(\belief)\right\},
\end{align*}
is closed.

We now show that similar to \autoref{theorem:concavification}, the set \repsetbp\ is the section at the prior of the convex hull of the graph of the correspondence $\vcorr$, that is
    \begin{align}\label{eq:bset-bp-representation}
        \repsetbp=\left\{\welfarev\in\reals^N:(\priorv,\welfarev)\in\mathrm{co}\left(\mathrm{graph}\;\vcorr\right)\right\}.
    \end{align}
    Clearly, if $\welfarev\in\repsetbp$, then $(\priorv,\welfarev)\in\mathrm{co}(\mathrm{graph}\;\vcorr)$. To see that the opposite holds, let $(\priorv,\welfarev)\in\mathrm{co}(\mathrm{graph}\;\vcorr)$. Then, a finite collection $(\weights_k,\beliefv_k,\mathrm{\tilde{v}}_k)_{k=1}^M$ of non-negative weights, beliefs, and correspondence values exists such that $\sum_{k=1}^M\weights_k=1$, $\sum_{k=1}^M\weights_k\beliefv_k=\priorv$,  $\mathrm{\tilde{v}}_k\in \vcorr(\belief_k)$ for all $k\in\{1,\dots,M\}$, and 
    \begin{align}\label{eq:one}
        \welfarev=\sum_{k=1}^M\weights_k \mathrm{\tilde{v}}_k.
    \end{align}
    Because for each $k\in\{1,\dots,M\}$, $\tilde{\mathrm{v}}_k\in \vcorr(\belief_k)$, the definition of $\vcorr$ implies a finite collection of non-negative weights and actions, $\left\{\alpha_{k,l},\aaction_l\right\}_{l=1}^{L_k}$, exists such that $\sum_{l=1}^{L_k}\alpha_{l,k}=1$, for all $l\in\{1,\dots,L_k\}$,  $\aaction_l\in\aaction^*(\belief_k)$ and for all $i\in\{1,\dots,N\}$,
    \begin{align}\label{eq:two}
        \tilde{\mathrm{v}}_{k,i}=\sum_{l=1}^{L_k}\alpha_{k,l}\hat{v}(\aaction_l,\belief_k,\type_i).
    \end{align}
 Define $\policy$ to be the \informationpolicy\ with signals $S=\{\beliefv_1,\dots,\beliefv_M\}$, and signal distribution $\excdf(\belief_k|\type)=(\nicefrac{\belief_k(\type)}{\prior(\type)})\weights_k$. Furthermore,  define $\selection$ so that for belief $\belief_k$, $\selection(\belief_k)\in\Delta(\aaction^*(\belief_k))$ coincides with $\{\alpha_{k,l}\}_{l=1}^{L_k}$. By construction, $(\policy,\selection)\in F$. Equations \ref{eq:one} and \ref{eq:two} together imply that
    \begin{align}
        \welfarev=\sum_{k=1}^M\weights_k\sum_{l=1}^{L_k}\alpha_{k,l}\mathrm{\hat{v}}(\aaction_l,\belief_k)=\mathbb{E}_{\unconditional}\left[\sum_{\aaction\in\actions}\alpha(\beliefc)(\aaction)\mathrm{\hat{v}}(\aaction,\belief)\right].
    \end{align}
Finally, \autoref{eq:bset-bp-representation} allows us to conclude that the set \repsetbp\ is closed: it is the section at the prior of the convex hull of the graph of the correspondence $\vcorr$, which is closed.
\end{proof}

\subsection{Omitted proofs from \autoref{sec:pareto}}\label{appendix:pareto}
\begin{proof}[Proof of \autoref{theorem:hyperplanes}]
To complete the proof of the first direction of \autoref{theorem:hyperplanes}, we provide the steps to show that if $\welfarev\in\Pareto$, a direction $\directionv\in\reals_+^N\setminus\{0\}$ exists such that
    \begin{align*}
        \directionv^T\welfarev=\max\{\directionv^T\allocbv:\allocbv\in\repset\},
    \end{align*}
    The opposite direction in \autoref{theorem:hyperplanes} immediately follows. 

    Fix a Pareto efficient \welfarev\ and let $\Gamma=\{\allocbv\in\reals^N:\allocbv\geq\welfarev\}$. Clearly, $\Gamma$ is convex and $\mathrm{int}\;\Gamma$ is non-empty. Because \welfarev\ is Pareto efficient, $\mathrm{int}\;\Gamma\cap\repset=\emptyset$. By Minkowski's separating hyperplane theorem, a direction $\directionv\in\reals^N\setminus\{0\}$ exists such that for all $\welfarev^{\prime\prime}\in\repset$ and $\allocbv\in\Gamma$,
\begin{align*}
\directionv^T\welfarev^{\prime\prime}\leq\directionv^T\allocbv.
\end{align*}
Because $\welfarev\in\Gamma$, we have that $\directionv^T\welfarev^{\prime\prime}\leq\directionv^T\welfarev$ for all $\welfarev^{\prime\prime}$ in \bset. Thus, $\directionv^T\welfarev=\max\{\directionv^T\welfarev^{\prime\prime}:\welfarev^{\prime\prime}\in\repset\}$ (cf. \autoref{eq:hyperplanes}). Similarly, because $\welfarev\in\repset$, then we have that $\directionv^T\welfarev\leq\directionv^T\allocbv$ for all $\allocbv\in\Gamma$.

We now show that $\directionv\geq0$. Let \canonicali\ denote the canonical vector that has a 1 in coordinate $\type_i$ and $0$ otherwise. Then, $\welfarev+\canonicali\in\Gamma$, so that 
\begin{align*}
\directionv^T\welfarev\leq\directionv^T(\welfarev+\canonicali)\Rightarrow 0\leq\direction(\type_i).
\end{align*}
By definition, $\directionv\neq\mathrm{0}$ so that without loss of generality $\directionv\in\Posteriors$. 
\end{proof}
 \begin{proposition}\label{proposition:usc-general} 
Suppose $\welfare(\cdot,\type)$ is upper-semicontinuous for all $\type\in\Types$.\footnote{Taking \types\ to be a compact Polish space, we endow the set of Borel probability measures on \types, \Posteriors, and on \Posteriors, $\Delta(\Posteriors)$, with the weak$^*$ topology, so they are also compact Polish \citep[Theorems 15.11 and 15.12]{aliprantis2013infinite}.} Suppose $\welfarev^*$ is a limit point of \Pareto. Then, a point $\welfarev^{**}\in\Pareto$ exists such that $\welfarev^{**}\geq\welfarev^*$. Consequently, any weakly monotone social welfare function attains a solution in \repset. 
\end{proposition}
\begin{proof}[Proof of \autoref{proposition:usc-general}]
Let $(\welfarev_n)_{n\in\mathbb{N}}\subset\Pareto$ be such that $\welfarev_n\rightarrow\welfarev^*$ as $n\rightarrow\infty$. For each $n\in\mathbb{N}$ a direction $\direction_n\in\Posteriors$ exists such that 
    \begin{align}\label{eq:n-max}
        (\forall\allocbv\in\repset)\directionv_n^T\welfarev_n\geq\directionv_n^T\allocbv.
    \end{align}
    Since \Posteriors\ is compact, then up to a subsequence $\directionv_n\rightarrow\directionv_*$. Linearity of $\directionv_n^T\allocbv$ implies that taking limits on both sides of \autoref{eq:n-max} we obtain
    \begin{align}\label{eq:limit-max}
        (\forall\allocbv\in\repset)\directionv_*^T\welfarev^*\geq\directionv_*^T\allocbv.
    \end{align}
    Thus, if $\welfarev^*\in\repset$, \autoref{theorem:hyperplanes} implies that $\welfarev^*\in\Pareto$ and $\directionv_*\in\Posteriors$ is the direction that witnesses this. 
    
    Now, for each $\welfarev_n$, a distribution over posteriors $\bsplit_n\in\bplausible$ exists such that $\welfarev_n=\mathbb{E}_{\bsplit_n}\left[\Reputationv\right]$. Because \bplausible\ is compact, we have that, up to a subsequence, $\bsplit_n\rightarrow\bsplit^*\in\bplausible$. \citet[Theorem 15.5]{aliprantis2013infinite} implies $\mathbb{E}_\bsplit[\Reputationv]$ is upper-semicontinuous as a function of \bsplit, thus for all $i\in\{1,\dots,N\}$ we have that
    \begin{align}\label{eq:strong-pareto}
        \mathbb{E}_{\bsplit^*}[\Reputationv_i]\geq\lim_{n\rightarrow\infty}\mathbb{E}_{\bsplit_n}\left[\Reputationv_i\right]=\lim_{n\rightarrow\infty}\welfarev_{n,i}=\welfarev_i^*.
    \end{align}

    Let $\welfarev^{**}\equiv\mathbb{E}_{\bsplit^*}[\Reputationv]$ and note that it is an element of $\repset$ that dominates $\welfarev^*$ coordinate-by-coordinate. Moreover, because $\directionv_*\in\Posteriors$, \autoref{eq:strong-pareto} implies that $\directionv_*^T\welfarev^{**}\geq\directionv_*^T\welfarev^*$. 
    This, together with \autoref{eq:limit-max}, implies $\welfarev^{**}\in\Pareto$, which completes the proof.
\end{proof}

The proof of the statements in \autoref{corollary:benefits-new} follows from the following result:

\begin{corollary}[(No) Benefit from disclosure]\label{corollary:no-benefits} The following hold:
\begin{enumerate}
    \item\label{itm:yes-benefit}
    All types benefit from disclosure if for all $\type\in\Types$, $\Reputation(\cdot,\type)$ is strictly convex in a neighborhood of the prior. Furthermore, if  $\Reputation(\cdot,\type)$ is everywhere strictly convex, then full disclosure is uniquely Pareto efficient.
    \item\label{itm:no-benefit} No disclosure is Pareto efficient if either
    \begin{enumerate}
        \item\label{itm:single} a type $\type\in\Types$ exists such that $\Reputation(\belief,\type)$ is concave in \belief, or
        \item\label{itm:indep-cv} a vector $a\in\mathbb{R}_+^{N}$ and a concave function $\welfare:\Posteriors\mapsto\reals$ exist such that for all $\type\in\types$, the welfare function is given by $a(\type)\welfare(\belief)+b(\type)$.
    \end{enumerate}
\end{enumerate}
\end{corollary}

\begin{proof}[Proof of \autoref{corollary:no-benefits}]
Consider first the conditions in part \ref{itm:yes-benefit}. Toward a contradiction, suppose that \allocvN\ is in the Pareto frontier. Then, by \autoref{proposition:pareto-improvement} a direction $\tilde{\direction}\in\Posteriors$ exists such that no disclosure is a solution to the supporting Bayesian persuasion problem in direction $\tilde{\direction}$ . However, under the conditions in part \ref{itm:yes-benefit}, the indirect utility function is strictly convex in a neighborhood of the prior for all directions $\direction\in\Posteriors$, contradicting that no disclosure is a solution to the supporting Bayesian persuasion problem in some direction $\tilde{\direction}$ in \Posteriors. Furthermore, when $\Reputation(\cdot,\type)$ is strictly convex, so is $\hat{v}_\direction$ for all $\direction\in\Posteriors$ and the value of full disclosure strictly dominates that of any other \informationpolicy.

The proof of part \ref{itm:no-benefit} follows from \autoref{theorem:hyperplanes} by looking at the solution of the supporting Bayesian persuasion problem in certain directions. Under the conditions in part \ref{itm:single}, no disclosure is a solution to the supporting Bayesian persuasion problem in direction $\direction\in\Posteriors$ such that $\direction(\type)=1$ and $\direction(\typeb)=0$ for $\typeb\neq\type$. Instead, under the conditions in part \ref{itm:indep-cv}, no disclosure is a solution to the supporting Bayesian persuasion problem in direction $\direction(\type)=\prior(\type)/a(\type)$.
\end{proof}


\begin{proof}[Proof of \autoref{corollary:benefits-binary}]The result follows from \autoref{proposition:pareto-improvement}. For the first part, the condition ensures $\belief(\type)\welfare(\belief,\type)$ is strictly convex for all $\type\in\{\type_1,\type_2\}$, so the indirect utility function $\hat{v}_\direction$ is strictly convex for any $\direction\in\Posteriors$ and full disclosure is the unique solution to the supporting Bayesian persuasion problem. For the second part, the condition ensures $\belief(\type)\welfare(\belief,\type)$ is weakly concave for type $\type$, and the result follows from looking at the direction \directionv\ such that $\direction(\type)=1$ and $\direction(\typeb)=0$ for $\typeb\neq\type$.
\end{proof}

\subsection{Omitted proofs in \autoref{sec:linear}}\label{appendix:theorem-linear-concavification}
\begin{proof}[Proof of \autoref{theorem:linear-concavification}]
As explained in the main text, necessity follows from noting
\begin{align*}
    \allocv\in\repset\Rightarrow \allocv=\dprior\sum_{m=1}^M\alpha_m\beliefvm\beliefvm^T\linearv\equiv\dprior\cpmat\linearv,
\end{align*}
where $M\leq 2N$ follows from \autoref{corollary:cardinality}. \cpmat\ is completely positive because it is the convex combination of rank-one non-negative matrices, $\beliefvm\beliefvmT$. That $\cpmat \ones=\priorv$ follows from the martingale property of beliefs.

For sufficiency, consider $\allocv=\dprior\cpmat\linearv$, for some completely positive matrix $\cpmat$, such that $\cpmat \ones=\priorv$. Then,  $\{\vectorsvone,\dots,\vectorsvM\}\subseteq\mathbb{R}_+^N$ exist such that 
\begin{align}
    \cpmat=\sum_{m=1}^M\vectorsvm\vectorsvm^T.
\end{align}
Let $\sqrt{\alpha_m}=\sum_{j=1}^N\vectorsvm_j$ and note $\vectorsvm/(\sqrt{\alpha_m})\equiv\beliefvm\in\Posteriors$.
\begin{align*}
    \cpmat=\sum_{m=1}^M\alpha_m\left(\frac{\vectorsvm}{\sqrt{\alpha_m}}\right)\left(\frac{\vectorsvm}{\sqrt{\alpha_m}}\right)^T=\sum_{m=1}^M\alpha_m\beliefvm\beliefvm^T.
\end{align*}
It remains to show $\sum_{m=1}^M\alpha_m=1$ and $\sum_{m=1}^M\alpha_m\beliefvm=\priorv$. Note that for all $i\in\{1,\dots,N\}$,
\begin{align}
    (\cpmat \ones)_i=\sum_{m=1}^M\alpha_m\beliefv_{mi}\sum_{j=1}^N\beliefv_{mj}=\sum_{m=1}^M\alpha_m\beliefv_{mi}=\prior(\type_i).
\end{align}
Furthermore,
\begin{align}
    \sum_{i=1}^N\mu_{0}(\type_i)=1=\sum_{i=1}^N\sum_{m=1}^M\alpha_m\beliefv_{mi}=\sum_{m=1}^M\alpha_m.
\end{align}
Thus, an \informationpolicy\ exists that generates the distribution over posteriors $\{\alpha_m,\beliefvm\}_{m=1}^M$. Therefore, $\allocv\in\repset$.
\end{proof}
\begin{proof}[Proof of \autoref{claim:truth-drifting}]
If $\Pr(\event)=0$, the statement is trivial. If $\Pr(\event)>0$, denote by $\transmati$ the i-th row of the matrix $\transmat\equiv \dprior\cpmat$, presented as a row-vector. By Bayes' rule, $\Pr\left(\event\right)=\priorv^T\eventcorr$ and $\Pr(\type_i\mid \event)=(\upmu_{0i}\eventcorr_i)/(\priorv^T\eventcorr)$,
so
\begin{gather*}
\expect_{\policy}\left[\Pr(\event\mid \signal)\mid\type_i\right]=\sum_{j=1}^{N}\expect_{\policy}\left[\Pr\left[\type_j\mid \signal\right]\mid\type_i\right]\Pr(\event\mid\type_j)=\transmati\eventcorr,\\
\expect_{\policy}\left[\Pr\left(\event\mid \signal\right)\mid \event\right]=\sum_{i=1}^{N}\Pr\left(\type_i\mid \event\right)\expect_{\policy}\left[\Pr\left(\event\mid \signal\right)\mid\type_i\right]=\sum_{i=1}^{N}\frac{\upmu_{0i}\eventcorr_i}{\priorv^T\eventcorr}\transmati\eventcorr.
\end{gather*}
Hence, the truth-drifting condition can be restated as: 
\[
\sum_{i=1}^{N}\frac{\upmu_{0i}\eventcorr_i}{\priorv^T\eventcorr}\transmati\eventcorr\geq\priorv^T\eventcorr.
\]
Define $\cpmathat\equiv\transmat \dprior=\dprior\cpmat \dprior$. By \autoref{theorem:linear-concavification}, $\cpmathat$ is a completely positive matrix such that $\cpmathat\priorv=\ones$ and $\priorv^T\cpmathat\priorv=1$. Hence, the truth-drifting condition can be restated in a matrix form as: 
\[
\left(\frac{\priorv*\eventcorr}{\priorv^T\eventcorr}\right)^{T}\cpmathat\left(\frac{\priorv*\eventcorr}{\priorv^T\eventcorr}\right)\geq\priorv^T\cpmathat\priorv.
\]
The term $\zetav\equiv (\priorv*\eventcorr)/(\priorv^T\eventcorr)$ is an element of the simplex $\Delta(\types)$, equal to \priorv\ when $\eventcorr=\ones$. Hence, showing that $\priorv$ is a minimizer of a quadratic form $\zetav^{T}\cpmathat\zetav$ among all $\zetav\in\Delta(\types)$ is enough to prove the result. Noting that we can rely on the Lagrangian approach, at
$\zetav=\priorv$, the derivative of the quadratic form is collinear to $\ones$ and hence,
collinear to the space $\Delta\left(\types\right)$. Thus, first-order
conditions are satisfied. At the same time,
$\cpmathat$ is completely positive and thus positive semi-definite. Thus, second-order
conditions are satisfied. The result follows.
\end{proof}

\setcounter{example}{3}
\begin{example}[$\allocv^{ND}$ and  $\allocv^{FD}$  not on the boundary of $\repset$]\label{example:fd-nd-not-bd} Consider the case of binary types, $\types=\{\type_1,\type_2\}$. Denote by $\belief\in[0,1]$ the probability of type $\type_2$ and let $\prior=1/2$. Consider the following welfare function:
\begin{align*}
\welfare(\belief,\type_1)=\frac{\sin(2\pi\belief)}{2(1-\belief)},\  \welfare(\belief,\type_2)=\frac{\sin(4\pi\belief)}{2\belief},
\end{align*}
with  $\welfare(1,\type_1)$ and $\welfare(0,\type_2)$ defined by continuity as equal to $-\pi$ and $2\pi$, respectively. Given this welfare function, the truth-adjusted welfare function is
\begin{align*}\Reputation(\belief,\type_1)=\sin(2\pi\belief),\  \Reputation(\belief,\type_2)=\sin(4\pi\belief).\end{align*}
The corresponding indirect utility in the supporting Bayesian persuasion problem in the direction $\directionv=(\lambda_1,\lambda_2)$ is equal to
\begin{align*}\hat v_\direction(\belief)=\lambda_1 \sin(2\pi\belief)+\lambda_2 \sin(4\pi\belief).\end{align*}
For any $\directionv\in\reals^2\setminus\{0\}$, $\hat v_\direction(\prior)=\hat v_\direction(1/2)=\hat v_\direction(0)=\hat v_\direction(1)=0$. Hence, both full disclosure and no disclosure results in zero payoff. At the same time, for any such $\directionv$, $\hat v_\direction(\belief)$ is a non-constant continuous function anti-symmetric around $\mu=1/2$. Hence, it achieves strictly positive values on $[0,1]$ and $\cav \hat v_\direction(\belief_0)>0$ so that optimal disclosure outperforms both full disclosure and no disclosure. \autoref{theorem:hyperplanes}--extended to all boundary points--implies that $\allocv^{ND}$ and  $\allocv^{FD}$ are not on the boundary of \repset. 
\end{example}

The proofs of \autoref{proposition:bi-pooling} and \autoref{corollary:linear-pareto} rely on the graph-theoretic approach in \cite{rayo2010optimal}, the main properties of which we summarize in \autoref{remark:rs-appendix}:
\begin{remark}[\citealp{rayo2010optimal}]\label{remark:rs-appendix}
\cite{rayo2010optimal} propose the following graphical depiction of an \informationpolicy, \policy. Given a direction \directionv, let the prospect values $(\frac{\direction(\type_i)}{\prior(\type_i)},\rho(\type_i))=(\gammavi,\linearvi)$ for $i=1,\dots, N$ be vertices of a graph in $\reals^2$. Connect the points $(\gammavi,\linearvi)$ and $(\gammavj,\linearvj)$ by an edge if and only if a signal $s$ exists such that $\excdf(\signal\mid\type_j)\excdf(\signal\mid\type_i)>0$. The set of types that have positive probability under \signal\ is called the \emph{pooling} set of signal \signal.

    Lemmas 2--5 in \cite{rayo2010optimal} establish that under any optimal \informationpolicy, the following hold:
    \begin{enumerate}[label=(\alph*)]
        \item\label{itm:monotonicity} the posterior expectations of the prospect values induced by any two signals are ranked (in vector order), that is, for any two signals $s,s^\prime$, either $(\mathbb{E}_{\belief_s}[\gamma(\typec)],\mathbb{E}_{\belief_s}[\linear(\typec)])\geq(\mathbb{E}_{\belief_{s^\prime}}[\gamma(\typec)],\mathbb{E}_{\belief_{s^\prime}}[\linear(\typec)])$ or the opposite inequality holds.
        \item\label{itm:segment} prospects appear in the support of some signal only if they lie on a straight line with non-positive slope, 
        \item\label{itm:endpoints} if the pooling segments\footnote{By part \ref{itm:segment}, the pooling set of a signal lies on a \emph{segment}.} of two signals do not lie on the same line, they can intersect only if they share an endpoint, and
        \item\label{itm:order} if two prospect values are ranked  and appear in the support of two signals, then the posterior expectations induced by these signals are ranked in the same way. 
    \end{enumerate}
\end{remark}
\begin{proof}[Proof of Proposition \ref{proposition:bi-pooling}]
By the arguments presented in the main text, any optimal \informationpolicy\ solves the instance of the problem of \cite{rayo2010optimal} in which the prospect values are $(0,\linear(\type_j))$ for $j\neq i$ and $(1,\linear(\targettype))$ for the sender and for the receiver, respectively.

Given the structure of the prospect values in our problem, the property in part \ref{itm:segment} implies that $\type_i$ is never pooled with lower-index types. Furthermore, whenever it is pooled with some type, it is pairwise pooled. The property in part \ref{itm:endpoints} implies that whenever $\type_i$ is pooled with some type $\type_j$, then no types $\type_k$, $\type_l$ with $k<j<l$ can be pooled. Together with the property in part \ref{itm:order}, this observation implies that whenever $\type_i$ is pooled with some type $\type_j$, it is also pooled with all types $\type_k$ with $k>j$. Moreover, as $\type_i$ is pooled with increasingly higher-index types, the corresponding posterior expectations increase in vector order, which means that higher signals induce higher reputation yet have a relatively higher proportion of $\type_i$ (if all types are equally likely, then the probability of pooling $\type_i$ with $\type_j$ increases in $j$).

It is left to show that the threshold type--the lowest type with which $\type_i$ is pooled--is not pooled with any type of lower index. However, because the threshold type  is of higher index than $\type_i$, such pooling could clearly be improved by pooling the threshold type exclusively with $\type_i$.
\end{proof}

\begin{proof}[Calculations for Example \ref{example:linear}]
Define the following parameterized family of \informationpolicies\ (rows correspond to types and columns to signals):
\begin{gather*}
\Pi_{1}\left(\alpha,\beta\right)=\left(\begin{array}{ccc}
\alpha & 1-\alpha & 0\\
1-\beta & \beta & 0\\
0 & 0 & 1
\end{array}\right),\quad \Pi_{2}\left(\alpha\right)=\left(\begin{array}{cc}
\alpha & 1-\alpha\\
1 & 0\\
0 & 1
\end{array}\right),\\
\Pi_{3}\left(\beta\right)=\left(\begin{array}{cc}
1 & 0\\
0 & 1\\
\beta & 1-\beta
\end{array}\right),\quad \Pi_{4}\left(\alpha,\beta\right)=\left(\begin{array}{ccc}
1 & 0 & 0\\
0 & \alpha & 1-\alpha\\
0 & 1-\beta & \beta
\end{array}\right).
\end{gather*}
Note  \informationpolicies\ $\Pi_{1}\left(1,1\right)$ and $\Pi_{4}\left(1,1\right)$
coincide and correspond to full disclosure. Likewise, \informationpolicies\
$\Pi_{2}\left(0\right)$ and $\Pi_{3}\left(1\right)$ both correspond to full pooling of types $\type_{1}$ and $\type_{3}$. Information structures $\policy_2$ and $\policy_3$ are the $\type_1$-\bipooling\ and the $\type_3$-noisy-degrading policies, respectively. Like the $\type_3$-\bipooling, $\policy_1$ separates $\type_3$ from $\type_1$ and $\type_2$, but unlike the $\type_3$-\bipooling, it allows for $\type_1$ and $\type_2$ to be pooled, which does not affect $\type_3$'s expected reputation. Similarly, $\policy_4$ separates $\type_1$ from $\type_2$ and $\type_3$ like the $\type_1$-noisy-degrading policy, but unlike this policy, it allows for $\type_2$ and $\type_3$ to be pooled.
%


By \autoref{proposition:linear-hyperplanes}, any solution to the supporting Bayesian persuasion problem solves the instance of the problem of \cite{rayo2010optimal} with prospect  values  $\{(\gammavi,\linearvi):i\in\{1,2,3\}\}$  for the sender and for the receiver, respectively.  For simplicity, we assume that the types are strictly ranked under $\linear$, i.e., $\linearv_1<\linearv_2<\linearv_3$.

The property in part \ref{itm:segment} in \autoref{remark:rs-appendix} implies that if $(\gammavi,\linearvi)<(\gammavj,\linearvj)$, then types $\type_i$ and $\type_j$ are never pooled (cf. \autoref{corollary:linear-pareto}). We can then immediately establish the properties of the boundary \informationpolicies\ in the following cases:
\begin{itemize}[leftmargin=*,label=--]

\item If $\gammav_1<\gammav_2<\gammav_3$, then the uniquely optimal \informationpolicy\ is full disclosure.

\item If $\gammav_2<\gammav_1<\gammav_3$, then any optimal \informationpolicy\ separates type $\type_3$ and belongs to class $\Pi_{1}(\alpha,\beta)$.

\item If $\gammav_1<\gammav_3<\gammav_2$, then any optimal \informationpolicy\ separates type $\type_1$ and belongs to class $\Pi_{4}(\alpha,\beta)$.

\item If $\gammav_2<\gammav_3<\gammav_1$, then an optimal \informationpolicy\ never pools types $\type_2$ and $\type_3$ and belongs to class $\Pi_{2}(\alpha)$.

\item If $\gammav_3<\gammav_1<\gammav_2$, then an optimal \informationpolicy\ never pools types $\type_1$ and $\type_2$ and belongs to class $\Pi_{3}(\beta)$.
\end{itemize}
In the remaining case $\gammav_3<\gammav_2<\gammav_1$, no two prospects are ranked. However, by the property in part \ref{itm:monotonicity} in \autoref{remark:rs-appendix}, the induced posterior expectations  are necessarily ranked. Hence, if all three prospects lie on a straight line, then no disclosure is optimal. In contrast, if the three prospects do not lie on a straight line, then an optimal \informationpolicy\  separates either types $\type_1$ and $\type_2$ or types $\type_2$ and $\type_3$, and thus belongs to either class $\Pi_{2}(\alpha)$ or to class $\Pi_{3}(\beta)$.

Finally, it is easy to see that by the same arguments, an optimal \informationpolicy\ for the cases in which $\gammavi=\gammavj$ for some $i$ and $j$ belongs to one of the same four classes of \informationpolicies. 

Knowing the classes of boundary \informationpolicies, 
 we can plot the \bset\ in \autoref{example:linear} by direct calculation.
\end{proof}
 \newpage
 \begin{center}\Large\textsc{Online Appendix}\end{center}
 \section{Data Limits}\label{sec:cohort}
 The analysis in the paper assumes that the \informationpolicy\ can arbitrarily condition on an \agent's payoff-relevant \characteristic. However, this assumption does not necessarily hold in many applications of interest. For instance, regulation may prevent the disclosure of protected characteristics, such as gender or race. Thus, when \type\ encompasses such characteristics, considering information structures that respect these restrictions is natural.

     In this section, we extend our analysis by removing this assumption. Formally, we consider the following extension of the model in \autoref{sec:model}. Together with the \agents' \characteristics, we are given a data source that is potentially informative about these \characteristics. The data source has realizations in a finite set $\Data\equiv\{\data_1,\dots,\data_M\}$. We describe the joint distribution over payoff-relevant \characteristics\ and data via the prior distribution on \Types, \prior, and a system of conditional probabilities $\{\jointprior(\cdot|\type):\type\in\Types\}$, describing the distribution of the data source \data\ conditional on the \characteristics\ \type. We let $\priordata\in\Delta(\Data)$ denote the induced marginal distribution on \Data.\footnote{Although we should index \priordata\ by $\left(\prior,\left(\jointprior(\cdot|\type)\right)_{\type\in\types}\right)$, we omit this dependence to simplify notation.} The model in \autoref{sec:model} corresponds to the case in which $\Types=\Data$ and $\jointprior(\data|\type)=\mathbbm{1}[\data=\type]$.
    
  We assume information can be provided to the outside observer only about data-source realizations, but not an \agent's type. Formally, an \informationpolicy\ $\policy=(\excdf,\signals)$ consists of a countable set of labels \signals\ and a mapping \excdf, which associates to each data-source realization, \data, a distribution over signals $\excdf(\cdot|\data)\in\Delta(\signals)$. Given an \informationpolicy\ $\policy$ and a signal realization $\signal\in\signals$,  updated beliefs about \type\ depend only on the updated belief about the realization of \data. Indeed,
 \begin{align}\label{eq:belief-data-type}
 \belief_s(\type)=\sum_{\data\in\Data}\frac{\prior(\type)\jointprior(\data|\type)}{\priordata(\data)}\beliefdata_s(\data),
 \end{align}
 where $\posteriordata_s$ is the marginal on \Data\ of the updated joint belief on $\types\times\Data$. It follows that we can define the welfare function as depending on beliefs about \data\ rather than about \type. That is, we can define the function $\weta:\Delta(\Data)\times\Types\mapsto\reals$ as follows:
 \begin{align*}
     \weta(\posteriordata,\type)=\welfare(\belief(\posteriordata),\type),
 \end{align*}
 where the function $\belief(\posteriordata)$ is determined by \autoref{eq:belief-data-type}.

 Given an \informationpolicy\ $(\excdf,\signals)$, the welfare of an individual of type \type\ is
 \begin{align}\label{eq:ip-payoff-general}
 \payoff_{\policy}(\type)\equiv &\mathbb{E}_{\conditional}[\weta(\tilde{\posteriordata},\type)]=\sum_{s\in\signals}\sum_{\data\in\Data}\jointprior(\data|\type)\excdf(s|\data)\weta(\posteriordata_s,\type),
 \end{align}
 and the Bayes welfare set continues to be defined as the set of \bprofiles.

 We now show  the analysis in the main text extends verbatim. Indeed, by the same arguments as in \autoref{sec:main}, the welfare of an individual of type \type\ under \informationpolicy\ $\policy=(\excdf,\signals)$ can be written as:
 \begin{align}\label{eq:accounting-general}
 \payoff_{\policy}(\type)&=\mathbb{E}_{\conditional}[\weta(\tilde{\posteriordata},\type)]=\expect_{\unconditional}[\wetahat(\tilde{\posteriordata},\type)],
 \end{align}
 where the truth-adjusted welfare function $\wetahat$ now takes the form:
 \begin{align}\label{eq:adjusted-payoff-general}
 \wetahat(\posteriordata,\type)=\sum_{\data\in\Data}\jointprior(\data|\type)\frac{\posteriordata(\data)}{\priordata(\data)}\weta(\posteriordata,\type).
 \end{align}
 By separating the variable on which welfare is conditioned on--the payoff-relevant \characteristics, \type--from the variable about which information is provided--the data source, \data--\autoref{eq:adjusted-payoff-general} allows us to provide further insight into the truth-adjusted welfare function in the model in \autoref{sec:model}. Indeed, note the likelihood correction is based on the variable \data, highlighting that it corresponds to the variable about which information is provided. Similar to before, we can interpret the likelihood-ratio adjustment as describing that each data-source realization \data\ has a budget $\priordata(\data)$ to be distributed across different (data) posteriors \posteriordata. Unlike the analysis before, individuals of type \type\ only own a fraction $\jointprior(\data|\type)$ of this ratio. 

 \autoref{eq:accounting-general} implies \autoref{theorem:concavification} immediately extends  to this setting:
 \begin{theorem}\label{theorem:cohort}
 The \bset\ \repset\ satisfies the following:
 \begin{align}
     \repset=\left\{\payoffv\in\mathbb{R}^N:(\priordata,\payoffv)\in\mathrm{co}\left(\mathrm{graph}\,\wetahatv\right)\right\}.
 \end{align}
 \end{theorem}
 In what follows, we explore how the \bset\ changes as we change the informativeness of the data source. Intuitively, we would expect that the \bset\ shrinks as data becomes \emph{less precise}. \autoref{proposition:Blackwell} below shows that this intuition holds when the notion of less precise coincides with the notion of \emph{garbling} in \cite{blac51}.

 Formally, given the distribution of payoff-relevant \characteristics\ $\prior\in\Posteriors$,  we wish to understand the effect of different data sources, as described by data-source realizations \Datab\ and conditional probability systems $\{\jointpriorb(\cdot|\type)\in\Delta(\Datab):\type\in\Types\}$. Following \cite{blac51}, we say $(\Datab,\jointpriorb)$ is a \emph{garbling} of $(\Data,\jointprior)$ if a stochastic matrix $G:\Data\mapsto\Delta(\Data^\prime)$ exists such that for every data-type pair $(\data^\prime,\type)$, 
 \begin{align*}
 \jointpriorb(\data^\prime|\type)=\sum_{\data\in\Data}G(\datab|\data)\jointprior(\data|\type).
 \end{align*}
 Let $\repset(\prior,\welfare,\Data,\jointprior)$ denote the \bset\ for prior type distribution \type\ and welfare function \welfare, as we vary the (informativeness of the) data source $(\Data,\jointprior)$. We then have the following:
 \begin{proposition}[Data Comparison]\label{proposition:Blackwell}
 $\repset(\prior,\welfare,\Datab,\jointpriorb)\subseteq\repset(\prior,\welfare,\Data,\jointprior)$ for all welfare functions \payoff\ and type distributions \prior\ if and only if $(\Datab,\jointpriorb)$ is a garbling of $(\Data,\jointprior)$.
 \end{proposition}

 \begin{proof}[Proof of \autoref{proposition:Blackwell}]
 One direction is straightforward: if $(\Datab,\jointpriorb)$ is a garbling of $(\Data,\jointprior)$,  any distribution of signals conditional on payoff-relevant \characteristics\ induced by some information structure under data  source $(\Datab,\jointpriorb)$ is feasible under data source $(\Data,\jointprior)$. Consequently, any welfare profile that can be induced by some information structure under $(\Datab,\jointpriorb)$ can be induced under  $(\Data,\jointprior)$.

 To obtain the other direction, 
  toward a contradiction, assume $(\Datab,\jointpriorb)$ is not a garbling of $(\Data,\jointprior)$. Then, by \cite{blac51}, a prior $\prior\in\Delta(\types)$ and a payoff function $u:A\times\Types\rightarrow\reals$ exist such that a decision maker  with utility $u$ derives strictly greater value from having access to $(\Datab,\jointpriorb)$ than to $(\Data,\jointprior)$. That is, letting $U(\belief)$ denote the decision maker's indirect utility, $ \max_{a\in A}\expect_{\belief}[u(a,\type)]$, we have that:
 \begin{align}\label{eq:Blackwell-proof-1}
     \sum_{\type\in\types}\prior(\type) \expect_{(\Datab,\jointpriorb)}[U(\belief)\mid \type]>\sum_{\type\in\types}\prior(\type) \expect_{(\Data,\jointprior)}[U(\belief)\mid \type].
 \end{align}
 Consider now the \bsets\ given  the welfare function $\payoff(\belief,\type)=U(\belief)$ and prior distribution $\prior$, under data sources $(\Data,\jointprior)$ and $(\Datab,\jointpriorb)$. We have that 
 \begin{align}
 \sum_{\type\in\types}\prior(\type) \expect_{(\Datab,\jointpriorb)}[U(\belief)\mid \type]&=\max_{\allocv\in \repset(\cdot,\Datab,\jointpriorb)}\sum_{\type\in\types}\prior(\type)\allocv(\type)\label{eq:Blackwell-proof-2} \\
 &\leq \max_{\allocv\in \repset(\cdot,\Data,\jointprior)}\sum_{\type\in\types}\prior(\type)\allocv(\type)=\sum_{\type\in\types}\prior(\type) \expect_{(\Data,\jointprior)}[U(\belief)\mid \type],\nonumber
 \end{align}
 where the equalities follow because the maximal ex ante payoff is obtained by having full access to available data, and the inequality follows from the assumption that $\repset(\prior,\welfare,\Datab,\jointpriorb)\subseteq \repset(\prior,\welfare,\Data,\jointprior)$ for all $\payoff$ and $\prior$. Comparing Equations \ref{eq:Blackwell-proof-1} and \ref{eq:Blackwell-proof-2} leads to the desired contradiction and the result follows. 
 \end{proof}

 \begin{remark}[When to blind an algorithm]\label{remark:liang}
 \autoref{proposition:Blackwell} stands in contrast with the recommendation in the algorithmic fairness literature to ``blind'' algorithms to sensitive inputs such as race or gender. Indeed, having taken into account the impact of the outside observer's incentives in the population's welfare, allowing the \informationpolicy\ to condition on the \agents' payoff-relevant \characteristics\ leads to the largest \bset, thereby (weakly) increasing the value of any social welfare function that is used to choose what \informationpolicy\ to implement.

 There may be other reasons outside our model that could justify blinding the information structure to the \agents' types. For instance, \cite{liang2022algorithmic} show that when an agent different from the social planner selects a decision-making algorithm, the social planner may prefer to restrict the inputs into the agent's algorithm. Even though in our model algorithms are information structures that send non-binding recommendations, a similar result would hold in our setting.
 \end{remark}

 We conclude this section with an example that illustrates the following two points. First, whereas \autoref{proposition:Blackwell} shows that less precise data sources limit the ability to generate and distribute welfare via information, the example shows that this effect is not uniform across \agents\ of different types. Second, the \bset\ may collapse to the no-disclosure \bprofile\ for data sources that are strictly more informative than no information in the Blackwell order. 

 \setcounter{example}{1}
 \begin{example}[\autoref{example:platforms} continued; Noisy Data]\label{example:Blackwell-platform}
 Suppose the online marketplace only has access to a noisy estimate of the consumer's type, perhaps from past purchases or undeleted cookies. We model this as a data source that reveals a consumer's type with a fixed precision $\precision\in[1/2,1]$: $\Data=\{\data_1,\data_2\}$ and $\jointprior(\data_i|\type_i)=\precision$. When $\precision=1$, the data source is perfectly informative about a consumer's type; when $\precision=1/2$, the data source is pure noise. More generally, if $\precision<\precision^\prime$,  the data source that corresponds to $\precision$ is a garbling of the data source  that corresponds to $\precision^\prime$.
 \begin{figure}[t!]
 \centering\scalebox{0.8}{%
 \begin{tikzpicture}[scale=1.2]
 \begin{axis}[xtick pos=left,ytick pos=left,
 width=8cm, height=8cm, 
             xmin=-0.1,xmax=0.9,
         ymin=-0.1,ymax=0.75,
         xtick={0,0.5},
         ytick={0,0.5},
         xlabel=$\allocv_H$,ylabel=$\allocv_L$
          ,
          x label style={at={(0.8,0.01)}},
     y label style={at={(0.09,0.8)},rotate=-90}
 ]
 \addplot +[name path=A,mark=none,color=blue, solid, thick] coordinates {(0,0.6) (0.7, 0.6)};
 \addplot +[mark=none,color=blue, solid, thick] coordinates {(0.8,0.4) (0.7, 0.6)};
 \addplot +[name path=C,mark=none,color=blue, solid,thick] coordinates {(0.6,0.2) (0.8, 0.4)};
 \addplot +[solid,name path=D, mark=none,color=blue,dashed, thick] coordinates {(0.152,0.101) (0.6,0.2)};
 \addplot +[solid,mark=none,color=blue, thick] coordinates {(0,0.6) (0, 0.202)};
 \addplot +[name path=B,dashed,mark=none,color=blue, thick] coordinates {(0.152,0.101) (0, 0.202)};
 \addplot [blue!30] fill between [of = A and B, soft clip={domain=0:1}];
 \addplot [blue!30] fill between [of = A and C, soft clip={domain=0:1}];
 \addplot [blue!30] fill between [of = A and D, soft clip={domain=0:1}];


 \addplot +[name path=A,mark=none,color=blue, solid,thick] coordinates {(0.210, 0.510) (0.630, 0.542)};
 \addplot +[mark=none,color=blue, thick,solid] coordinates {(0.630, 0.542) (0.8,0.4)};
 \addplot +[name path=C,mark=none,color=blue, solid,thick] coordinates {(0.6,0.2) (0.8,0.4)};
 \addplot +[solid,name path=D, mark=none,color=blue,dashed, thick] coordinates {(0.152,0.101) (0.6,0.2)};
 \addplot +[solid,mark=none,color=blue, thick] coordinates {(0.210, 0.510) (0.0938,0.228)};
 \addplot +[name path=B,dashed,mark=none,color=blue, thick] coordinates {(0.152,0.101) (0.0938,0.228)};
 \addplot [blue!45] fill between [of = A and B, soft clip={domain=0:1}];
 \addplot [blue!45] fill between [of = A and C, soft clip={domain=0:1}];
 \addplot [blue!45] fill between [of = A and D, soft clip={domain=0:1}];


 \addplot +[name path=A,mark=none,color=blue, solid,thick] coordinates {(0.420, 0.420) (0.508,0.436)};
 \addplot +[mark=none,color=blue, thick,solid] coordinates {(0.508,0.436) (0.8,0.4)};
 \addplot +[name path=C,mark=none,color=blue, solid,thick] coordinates {(0.6,0.2) (0.8,0.4)};
 \addplot +[solid,name path=D, mark=none,color=blue,dashed, thick] coordinates {(0.151,0.100) (0.6,0.2)};
 \addplot +[name path=B,solid,mark=none,color=blue, thick] coordinates {(0.420, 0.420) (0.281, 0.281)};
 \addplot +[dashed,mark=none,color=blue, thick] coordinates {(0.151,0.100) (0.281, 0.281)};

 \addplot [blue!60] fill between [of = A and C, soft clip={domain=0:1}];
 \addplot [blue!60] fill between [of = B and D, soft clip={domain=0:1}];

 \node at (axis cs:0.1,0.55) {\large$\precision=1$};
 \node at (axis cs:0.32,0.45) {\large$\precision=0.85$};
 \node at (axis cs:0.52,0.3) {\large$\precision=0.7$};
 \addplot[color=blue,mark=*]coordinates{(0.6,0.2)}
 [every node/.style={xshift=0pt}]
 node[right] {\large$\color{black}{\precision=0.55}$};

 \end{axis}
 \end{tikzpicture}  }
         \caption{Noisy data in the online marketplace. \bsets\ for different values of $\precision\in\{0.55, 0.7, 0.85, 1\}$.}\label{fig:Blackwell}
 \end{figure}

 \autoref{fig:Blackwell} illustrates the \bset\ \repset\ for different precision values. Three features are worth noting. First, in line with \autoref{proposition:Blackwell}, \bsets\ resulting from data sources with lower precision are subsets of those with higher precision. When $\precision=1$, the \bset\ naturally coincides with the one in \autoref{fig:admissions} in the main text. Second, at high values of $\precision$, lower data precision has asymmetric effects across types: it decreases the maximal payoff of \typel-consumers without affecting their minimal payoff, yet it increases the minimal payoff of \typeh-consumers without affecting their maximal payoff. Indeed, for sufficiently low values of \precision, the unique Pareto efficient information structure is the one that maximizes the payoff  of \typeh-consumers. That is, in this example, 
 lower data precision benefits \typeh-consumers. Finally, whereas it is immediate that the \bset\ coincides with the no-disclosure profile $\allocv^{ND}$ when $\precision=1/2$, the \bset\ actually collapses to this point at $\precision=3/5$: Once $\precision<3/5$,  generating Bayes plausible distributions over posteriors with support outside the interval $[1/2,3/4)$ is not possible, and on this interval, \payoff\ is constant. This feature highlights that an incrementally more informative data source may have a discontinuous impact on welfare redistribution possibilities. \hfill$\closing$
 \end{example}
\end{document}